\newcolumntype{L}[1]{>{$}p{#1}<{$}}
\newcolumntype{C}[1]{>{\centering$}p{#1}<{$}}
\newcolumntype{R}[1]{>{\raggedleft$}p{#1}<{$}}
 \renewenvironment{thebibliography}[1]{%
   \begin{oldthebibliography}{#1}%
     \setlength{\parskip}{0ex}%
     \setlength{\itemsep}{0ex}%
     \fontsize{9.5}{8.5} 
     \selectfont
}%
 {%
   \end{oldthebibliography}%
 }
\newtheoremstyle{jamiestyle}
  {4pt}
  {0pt}
  {\it}
  {0pt}
  {\bf}
  {.}
  { }
  {}
\theoremstyle{jamiestyle}
\newtheorem{thrm}{Theorem}[section]
\newtheorem{prop}[thrm]{Proposition}
\newtheorem{lemm}[thrm]{Lemma}
\newtheorem{corr}[thrm]{Corollary}
\newtheoremstyle{jamienfstyle}
  {4pt}
  {0pt}
  {\normalfont}
  {0pt}
  {\bf}
  {.}
  { }
  {}
\theoremstyle{jamienfstyle}
\newtheorem{nttn}[thrm]{Notation}
\newtheorem{defn}[thrm]{Definition}
\newtheorem{xmpl}[thrm]{Example}
\newtheorem{rmrk}[thrm]{Remark}
\newlength{\mylength}
\newenvironment{frameqn}%
{\setlength{\fboxsep}{6pt}
\setlength{\mylength}{\linewidth}
\addtolength{\mylength}{-2\fboxsep}%
\addtolength{\mylength}{-2\fboxrule}%
\Sbox
\minipage{\mylength}%
\setlength{\abovedisplayskip}{0pt}%
\setlength{\belowdisplayskip}{0pt}%
$$}%
{$$\endminipage\endSbox
\[\fbox{\TheSbox}\]}
\newcommand\vect[1]{\overline{#1}}
\newcommand\somerel{\mathrel{\mathcal R}}
\newcommand{\rtm}{r}
\newcommand{\stm}{s}
\newcommand{\ttm}{t}
\newcommand{\aeq}{=_{\scriptstyle\alpha}}
\newcommand{\beq}{=_{\scriptstyle{\beta}}}
\newcommand{\deffont}[1]{\textbf{#1}}
\newcommand{\f}[1]{\ensuremath{\text{$\mathit{#1}$}}}
\newcommand{\fcomp}{\circ}
\newcommand{\lam}[1]{\lambda{#1}.}
\newcommand{\beeq}[1]{\ensuremath{\langle #1 \rangle_{\scalebox{.5}{\hspace{-.2ex}$\beta$}}}} 
\newcommand{\rulefont}[1]{\ensuremath{(\mathbf{#1})}}
\newcommand{\sm}{\mapsto}
\newcommand{\ssm}{:=}
\newcommand{\dom}{\f{dom}}
\newcommand{\supp}{\f{supp}}
\newcommand{\theory}[1]{\ensuremath{\mathsf{#1}}}
\newcommand\fto{{\to}}
\newcommand\limp{{\Rightarrow}}
\newcommand\liff{\Leftrightarrow}
\newcommand{\fa}{\f{fa}}
\newcommand{\fv}{\f{fv}}
\newcommand\fix{\f{fix}}
\newcommand\nontriv{\f{nontriv}}
\newcommand\den[1]{{\hspace{.00ex}\scalebox{.55}{$#1$}}}
\newcommand\iden{\den{\interp I}} 
\newcommand\jden{\den{\interp J}} 
\newcommand{\idenot}[2]{\denot{\interp I}{#1}{#2}}
\newcommand{\jdenot}[2]{\denot{\interp J}{#1}{#2}}
\newcommand\interp[1]{\ensuremath{\mathscr #1}}
\newcommand{\denot}[3]{\llbracket #3 \rrbracket_{\scalebox{.6}{$#2$}}^\den{#1}} 
\newcommand{\act}{{\cdot}}
\newcommand\cent{\vdash}
\newcommand\ment{\vDash}
\newcommand\gloment{\vDash_{\scalebox{.4}{glo}}}
\newcommand{\id}{{id}}
\newcommand\atomsdown{\mathbb A^{\hspace{-.3ex}{}_{{}^<}}}
\newcommand\atomsup{\mathbb A^{\hspace{-.3ex}{}_{{}^>}\!}} 
\newcommand\Forall[1]{\forall #1.}
\newcommand\Exists[1]{\exists #1.}
\title{Nominal Henkin Semantics: simply-typed lambda-calculus models in nominal sets}
\date{}
\author{Murdoch J. Gabbay
\and
Dominic P. Mulligan\thanks{\tiny 
We are very grateful to Peter Selinger and to two anonymous referees for their useful and constructive comments.
We acknowledge the support of the Leverhulme trust.
We acknowledge the support of grant RYC-2006-002131 at the Polytechnic University of Madrid. 
The project CerCo acknowledges the financial support of the Future and Emerging Technologies (FET) programme within the Seventh Framework Programme for Research of the European Commission, under FET-Open grant number: 243881.  
}}
\begin{document}
\maketitle
\begin{abstract}
We investigate a class of nominal algebraic Henkin-style models for the simply typed $\lambda$-calculus in which variables map to names in the denotation and $\lambda$-abstraction maps to a (non-functional) name-abstraction operation.
The resulting denotations are smaller and better-behaved, in ways we make precise, than functional valuation-based models.

Using these new models, we then develop a generalisation of $\lambda$-term syntax enriching them with existential meta-variables, thus yielding a theory of incomplete functions.
This incompleteness is orthogonal to the usual notion of incompleteness given by function abstraction and application, and corresponds to holes and incomplete objects. 
\end{abstract}


\section{Introduction}
\label{sect.introduction}

In this paper we develop a Henkin-style semantics for the simply-typed $\lambda$-calculus in nominal sets.
The simply-typed $\lambda$-calculus (STLC) has notions of typed variable, substitution, and function abstraction.
Correspondingly, our models in nominal sets will enrich `ordinary' sets with typed names, a substitution action, and name-abstraction.
Thus, concepts that are normally characteristic of syntax---like variable, substitution, and variable-binding---are explicitly represented as nominal algebraic structure \cite{gabbay:nomuae}.

The resulting models have different properties from traditional valuation-style (`closed') semantics.
Intuitively this is because leaving names in the denotation gives the models more structure---we have more information about `where an element came from'.

For instance, Proposition~\ref{prop.xi} (the \rulefont{\xi} rule) and Theorem~\ref{thrm.well-pointed} (well-pointedness) are properties that hold of the nominal models of this paper, and fail for `classical' treatments (see Examples~\ref{xmpl.two.point} and~\ref{xmpl.one.point}). 
This is because the direct inclusion of names into the denotation forces there to be `enough' elements of the model, and naturality requirements of the models require these elements to be `sufficiently distinguishable'.
These conditions cannot be expressed without names in the denotation.

Furthermore, we find that we can extend this to a syntax and semantics for existential variables. 
That is, we will extend STLC syntax with `holes'. 
The technique used is essentially the same as the \emph{nominal terms} of \cite{gabbay:nomu-jv} (a permissive variant thereof, following \cite{gabbay:perntu,gabbay:perntu-jv}) but taking semantics in nominal models of STLC instead of in datatypes of abstract syntax with binding.

%
%

Because $\lambda$-abstraction maps to atoms-abstraction, the denotation of functions does not involve function spaces. 
Because variables map to themselves, valuations are not used either; their role is taken by the substitution for names.
Thus we obtain a simple `first-order flavoured' completeness proof (Theorem~\ref{thrm.st.complete}).

In summary, nominal Henkin models differ from `ordinary' Henkin models by including variables and substitution in the underlying domain of the denotation as nominal algebraic structure.
This yields a new class of models which seems to not display certain pathologies of the `ordinary' models, and which can be leveraged to design novel calculi with applications e.g. to existential variables.

\section{Background}

\subsubsection*{Background on simply-typed $\lambda$-calculus}

\begin{defn}
\label{defn.atoms}
Fix a countably infinite set of \deffont{atoms} $\mathbb A$.

We use a \deffont{permutative} convention that $a,b,c,\ldots$ range over \emph{distinct} atoms (so for instance in Definition~\ref{defn.sub} the $a_i$ are silently assumed distinct, in Definition~\ref{def.permutation} $a$ and $b$ are taken distinct, and so on).
\end{defn}

\begin{defn}
\label{defn.pts.sorts}
\begin{enumerate*}
\item
Fix a nonempty set of \deffont{base types} $\tau\in\f{BaseTypes}$.
Define \deffont{(simple) types} by
$\phi ::= \tau \mid \phi\to\phi$.
Let $\phi$, $\psi$, $\chi$ range over types.
\label{defn.terms.and.substitutions}
\item
Fix a set of \deffont{constants} $C \in \f{Constants}$, to each of which is associated a type $\f{type}(C)$.

\noindent Define \deffont{terms} by:
$\rtm      ::=  a \mid C \mid \lam{a{:}\phi}\rtm \mid \rtm\rtm$.
Let $r$, $s$, $t$ range over terms.
$\lambda a$ binds $a$ in $\lam{a{:}\phi}\stm$ and we take terms up to $\alpha$-equivalence as usual. 
\end{enumerate*}
\noindent Define \deffont{free atoms} $\fa(\rtm)$ by
$\fa(a) =  \{ a \}$,\ 
$\fa(C) = \varnothing$,\ 
$\fa(rs) = \fa(r)\cup\fa(s)$,\ 
and
$\fa(\lam{a{:}\phi}r) =  \fa(r) {\setminus} \{ a \}$. 
\end{defn}

\begin{defn}
\label{defn.sub}
Give terms a \deffont{capture-avoiding substitution action} $\rtm[a_i{\ssm}s_i]_1^n$ (side-conditions can be guaranteed by $\alpha$-renaming):
$$
\hspace{-.5em}\begin{array}{r@{\ }l@{\quad}r@{\ }l@{\quad}l}
a_j[a_i{\ssm}s_i]_1^n=&s_j
&
b[a_i{\ssm}s_i]_1^n=&b
\\
C[a_i{\ssm}s_i]_1^n=&C
&
(\lam{c{:}\chi}r)[a_i{\ssm}s_i]_1^n=&\lam{c{:}\chi}(r[a_i{\ssm}s_i]_1^n) &(c\not\in\bigcup_1^n(\{a_i\}\cup\fa(s_i)))
\\
(rs)[a_i{\ssm}s_i]_1^n=&(r[a_i{\ssm}s_i]_1^n)(s[a_i{\ssm}s_i]_1^n)
\end{array}
$$
\end{defn}


\begin{defn}
\label{defn.congruence}
\label{defn.beta.equivalence}
Let $\beq$ be the least equivalence on terms (up to $\alpha$-equivalence) such that: 
\begin{displaymath}
\begin{prooftree}
\rtm\beq\rtm'\quad \stm\beq\stm'
\justifies
\rtm\stm\beq\rtm'\stm
\using\rulefont{CongApp}
\end{prooftree}
\qquad
\begin{prooftree}
\stm\beq\stm'
\justifies
\lam{a{:}\phi}\stm\beq\lam{a{:}\phi}\stm'
\using\rulefont{\xi}
\end{prooftree}
\qquad
\begin{prooftree}
\phantom{}
\justifies
(\lam{a{:}\phi}\rtm) \ttm \beq \rtm[a \ssm \ttm]
\using\rulefont{\beta}
\end{prooftree}
\end{displaymath}
\end{defn}


\begin{defn}
\label{defn.st.environments}
A \deffont{type environment} $\Gamma$ is a set of \deffont{atomic typings} $a:\phi$ which is \emph{functional} in the sense that if $a:\phi$ and $a:\phi'$ then $\phi=\phi'$.

\deffont{Derivable} typing judgements $\Gamma\cent r:\phi$ are defined using the (standard) rules in Figure~\ref{fig.typing.rules}.

Define the \deffont{domain} of $\Gamma$ by $\dom(\Gamma)=\{a\mid \Exists{\phi}(a{:}\phi\in\Gamma)\}$.
Write $\Gamma,a{:}\phi$ for the type environment obtained by adding $a{:}\phi$ to $\Gamma$; if we write this, we impose a condition that $a\not\in\dom(\Gamma)$. 
\end{defn}


\begin{figure}
\begin{displaymath}
\begin{array}{c}
\begin{prooftree}
\phantom{h} 
\justifies
\Gamma,a{:}\phi \cent a : \phi
\using\rulefont{V}
\end{prooftree}
\quad
\begin{prooftree}
(\f{type}(C)=\phi)
\justifies
\Gamma \cent C : \phi
\using\rulefont{C}
\end{prooftree}
\quad
\begin{prooftree}
\Gamma,a{:}\phi \cent \rtm:\psi\ \ (a{\not\in}\dom(\Gamma)) 
\justifies
\Gamma\cent (\lam{a{:}\phi}\rtm) : \phi\fto\psi 
\using\rulefont{L}
\end{prooftree}
\quad
\begin{prooftree}
\Gamma \cent \rtm : \phi\fto\psi 
\ \ 
\Gamma \cent \stm : \phi 
\justifies
\Gamma \cent \rtm\stm : \psi
\using\rulefont{A}
\end{prooftree}
\end{array}
\end{displaymath}
\caption{Typing rules for the simply-typed $\lambda$-calculus (STLC)}
\label{fig.typing.rules}
\end{figure}

\begin{defn}
\label{defn.typing.rules}
A \deffont{typing judgement} is a tuple $\Gamma\cent r:\phi$.
The \deffont{derivable} typing judgements are defined in Figure~\ref{fig.typing.rules}.
\end{defn}

\subsubsection*{Background on nominal sets}



\begin{defn}
\label{defn.U}
The cumulative hierarchy of \deffont{ZFA sets} $\mathcal U$ is the least fixed point of $\mathcal U=\mathbb A\cup\f{powerset}(\mathcal U)$.
This can be constructed by starting from atoms and transfinitely adding all subsets (a construction going back to Von Neumann \cite{vonneumann:ubewam}).
\end{defn}

\begin{defn}
\label{defn.swap}
\label{def.permutation}
\label{def.nontriv}
Given $a,b\in\mathbb A$ write $(a\ b)$ for the \deffont{swapping} bijection on atoms mapping $a$ to $b$, $b$ to $a$, and any other $c\in\mathbb A\setminus\{a,b\}$ to $c$.

If $\pi$ is a bijection on atoms define $\nontriv(\pi)=\{a\mid \pi(a)\neq a\}$.
 
Write $\mathbb P$ for the group of bijections (finitely) generated by swappings, and call these bijections \deffont{permutations}.

Write $\pi \fcomp \pi'$ for the \deffont{composition} of $\pi$ and $\pi'$ (so $(\pi\circ\pi')(a)=\pi(\pi'(a))$).
Write $\id$ for the \deffont{identity} permutation (so $\id(a)=a$ always). 
\end{defn}

\begin{lemm}
\label{lemm.basic.property.of.permutations}
A bijection $\pi$ on atoms is a permutation if and only if $\nontriv(\pi) = \{a \mid \pi(a) \neq a \}$ is finite.
\end{lemm}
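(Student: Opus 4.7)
The statement is a standard back-and-forth characterisation, so I will handle the two directions separately.

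For the ``only if'' direction, I would prove by induction on the length of the generating word that every $\pi \in \mathbb P$ has finite $\nontriv(\pi)$. The base case is $\id$, for which $\nontriv(\id) = \varnothing$. For the inductive step, if $\pi = (a\ b) \fcomp \pi'$ with $\nontriv(\pi')$ finite, then for any $c \notin \nontriv(\pi') \cup \{a, b\}$ we have $\pi'(c) = c$ and $(a\ b)(c) = c$, so $\pi(c) = c$. Hence $\nontriv(\pi) \subseteq \nontriv(\pi') \cup \{a, b\}$, which is finite.

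For the ``if'' direction, I would proceed by strong induction on the cardinality $n = |\nontriv(\pi)|$. If $n = 0$, then $\pi = \id$ is (vacuously) a composition of swappings. Otherwise pick some $a \in \nontriv(\pi)$ and set $b = \pi(a)$, so $b \neq a$. Consider $\pi' = (a\ b) \fcomp \pi$. The main calculation is to check that $\nontriv(\pi') \subsetneq \nontriv(\pi)$: observe that $\pi'(a) = (a\ b)(b) = a$, so $a \notin \nontriv(\pi')$, whereas $a \in \nontriv(\pi)$. Moreover, if $c \notin \nontriv(\pi) \cup \{a, b\}$ then $\pi(c) = c \neq a, b$ so $\pi'(c) = c$; and since $\pi$ is a bijection with $\pi(a) = b$, one verifies that $b \in \nontriv(\pi)$ (else $\pi^{\mone}(b) = b \neq a$, contradicting $\pi(a) = b$) and that $\pi'(b) \in \{a, b, \pi(b)\} \setminus\{a\}$, which lies in $\nontriv(\pi)$ if it differs from $b$. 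Combining these gives $\nontriv(\pi') \subseteq \nontriv(\pi) \setminus \{a\}$, which is strictly smaller. By the inductive hypothesis $\pi'$ is a permutation, and then $\pi = (a\ b) \fcomp \pi'$ is too.

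The only subtle point is the case analysis on $\pi'(b)$ in the reverse direction; everything else is bookkeeping. I would write it out carefully but concisely, noting that the induction works on the natural number $|\nontriv(\pi)|$ because we are allowed to use the finiteness hypothesis.
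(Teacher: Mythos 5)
Your proof is correct. The paper states this lemma without any proof (it is treated as a standard fact about finitely generated permutation groups), and your two-direction argument---induction on the length of the generating word for one direction, strong induction on $|\nontriv(\pi)|$ using the strict decrease $\nontriv((a\ b)\fcomp\pi)\subseteq\nontriv(\pi)\setminus\{a\}$ for the other---is exactly the standard way to fill it in, and is consistent with the paper's composition convention $(\pi\fcomp\pi')(a)=\pi(\pi'(a))$.
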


\begin{defn}
\label{defn.perm}
Give $\mathcal U$ a \deffont{permutation action} $\pi\act x$ inductively defined by $\pi\act a=\pi(a)$ for $a\in\mathbb A$, and
$\pi\act X=\{\pi\act x\mid x\in X\}$ for $X\in\mathcal U\setminus\mathbb A$.

If $A\subseteq\mathbb A$ write $\fix(A)=\{\pi\in\mathbb P\mid \Forall{a{\in}A}\pi(a)=a\}$.

Say that $A\subseteq\mathbb A$ \deffont{supports} $x\in\mathcal U$ when $\Forall{\pi{\in}\fix(A)}\pi\act x=x$.
\end{defn}

\begin{defn}
Call an element $x\in\mathcal U$ \deffont{finitely-supported} when it has a unique least finite supporting set $\supp(x)$.
Write $a\#x$ for $a\not\in\supp(x)$ and read this as `$a$ is \deffont{fresh for} $x$'.
\end{defn}

\begin{lemm}[\cite{gabbay:newaas-jv,gabbay:fountl}]
If $x\in\mathcal U$ has a finite supporting set $A$, then $\supp(x)$ exists.
\end{lemm}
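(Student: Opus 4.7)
The plan is to set $\supp(x) = \bigcap \{B \subseteq A : B \text{ supports } x\}$. Since $A$ is finite it has only finitely many subsets, so this is a finite intersection of finite subsets of $A$, and hence itself a finite subset of $A$. The whole argument then rests on one key lemma: \emph{if $B_1, B_2 \subseteq \mathbb A$ each support $x$, then so does $B_1 \cap B_2$}. Given this, iterating over the finitely many subsets of $A$ that support $x$ shows that $\supp(x)$ itself supports $x$; and if $B'$ is any other finite supporting set, then by the lemma $B' \cap A$ supports $x$, so $\supp(x) \subseteq B' \cap A \subseteq B'$, giving minimality. Uniqueness of the least finite supporting set is then automatic.

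The main obstacle is the key lemma, for which I would use a freshness-and-conjugation argument. Take $\pi \in \fix(B_1 \cap B_2)$ and write $N = \nontriv(\pi)$, a finite set disjoint from $B_1 \cap B_2$. Since $N \cap B_1 \cap B_2 = \varnothing$, partition $N = N_1 \cup N_2 \cup N_0$ where $N_1 \subseteq B_1 \setminus B_2$, $N_2 \subseteq B_2 \setminus B_1$ and $N_0 \subseteq \mathbb A \setminus (B_1 \cup B_2)$. Choose a finite set $F \subseteq \mathbb A$ of fresh atoms, disjoint from $B_1 \cup B_2 \cup N$, equipped with a bijection $f : N \to F$, and let $\tau = \prod_{a \in N}(a\ f(a))$. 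This splits according to the partition as $\tau = \tau_1 \tau_2 \tau_0$ with pairwise disjoint supports. Each factor lies in some $\fix(B_i)$: $\tau_1 \in \fix(B_2)$, $\tau_2 \in \fix(B_1)$, and $\tau_0 \in \fix(B_1) \cap \fix(B_2)$, because each swaps atoms outside one $B_i$ with fresh atoms that are also outside that $B_i$. Hence $\tau_i \act x = x$ for every $i$, and so $\tau \act x = x = \tau^{-1} \act x$.

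A direct calculation (using that $\pi$ permutes $N$ since it fixes the complement of $N$) shows that $\tau \pi \tau^{-1}$ fixes every atom outside $F$, so $\nontriv(\tau \pi \tau^{-1}) \subseteq F$ and hence $\tau\pi\tau^{-1} \in \fix(B_1) \cap \fix(B_2)$, giving $(\tau\pi\tau^{-1}) \act x = x$. Combining:
\[
\pi \act x \;=\; \tau^{-1} \act \bigl((\tau\pi\tau^{-1}) \act (\tau \act x)\bigr) \;=\; \tau^{-1} \act x \;=\; x,
\]
which closes the key lemma and hence the proof.
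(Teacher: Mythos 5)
Your proof is correct, and it is essentially the argument the paper is implicitly relying on: the paper offers no proof of its own, deferring to \cite{gabbay:newaas-jv,gabbay:fountl}, and the proof found there is exactly your intersection-of-supporting-sets argument, with the binary intersection lemma established by conjugating $\pi$ with a freshening permutation $\tau$. One caution: your key lemma as stated, for arbitrary $B_1,B_2\subseteq\mathbb A$, is false in general---split $\mathbb A=\mathbb A_1\uplus\mathbb A_2$ into two infinite halves and take $x=\mathbb A_1$; each half supports $x$ but their empty intersection does not---and the hidden hypothesis is precisely the step where you choose $F$ disjoint from $B_1\cup B_2\cup N$, which requires $\mathbb A\setminus(B_1\cup B_2\cup N)$ to contain at least $|N|$ atoms. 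Since you only ever apply the lemma to subsets of the finite set $A$ and to $B'\cap A$ with $B'$ finite, this hypothesis always holds and your argument is sound; just restate the lemma with a finiteness (or cofinite-complement) side-condition.
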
 



Our reasoning can be formalised in first-order logic, enriched with the axioms of Zermelo-Fraenkel set theory with atoms (\deffont{ZFA}).
This is just a formal way of stating that we have assumed atoms and sets and we reason about them mathematically, but stating it in terms of formal logic lets us express an important observation, that our reasoning is symmetric under permutation: 
\begin{thrm}
\label{thrm.equivar}
If $\vect x$ denotes a list $x_1,\ldots,x_n$, write
$\pi\act \vect x$ for $\pi\act x_1,\ldots,\pi\act x_n$.
Suppose $\Phi(\vect x)$ is a ZFA predicate on variables included in $\vect x$.
Then we have \emph{equivariance} \cite[Section~4]{gabbay:fountl}:\footnote{$\vect x$ must contain \emph{all} the variables mentioned in the predicate.
It is not the case that $a=a$ if and only if $a=b$---but it is the case that $a=b$ if and only if $b=a$.} 
$
\Phi(\vect x) \liff \Phi(\pi\act \vect x) .
$
\end{thrm}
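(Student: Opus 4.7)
The plan is to prove equivariance by induction on the structure of the ZFA formula $\Phi$, after first establishing that the permutation action on $\mathcal U$ is itself a well-behaved bijection respecting the two primitive relations of set theory, namely $=$ and $\in$.

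First I would check that for any $\pi\in\mathbb P$, the map $x\mapsto\pi\act x$ is a bijection on $\mathcal U$ (the inverse being the action of $\pi^{\mone}$, which is also a permutation by Lemma~\ref{lemm.basic.property.of.permutations}). From this it follows immediately that $x=y$ iff $\pi\act x=\pi\act y$. For membership, by Definition~\ref{defn.perm} we have $\pi\act Y=\{\pi\act z\mid z\in Y\}$ when $Y\in\mathcal U\setminus\mathbb A$, so $x\in Y$ iff $\pi\act x\in\pi\act Y$; and $x\in a$ is false for any atom $a\in\mathbb A$, as is $\pi\act x\in\pi\act a=\pi(a)$, so the equivalence holds vacuously there. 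These are the only atomic predicates of the pure ZFA language, so the base cases are settled.

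Next I would handle the inductive cases. The propositional connectives $\land$, $\lor$, $\neg$, $\limp$ are immediate from the induction hypothesis applied componentwise. For the quantifier cases, consider $\Exists{y}\Phi(y,\vect x)$. If $\Phi(y,\vect x)$ holds for some $y$, then by the induction hypothesis $\Phi(\pi\act y,\pi\act\vect x)$ holds, witnessing $\Exists{y}\Phi(y,\pi\act\vect x)$. Conversely, if $\Exists{y}\Phi(y,\pi\act\vect x)$, pick a witness $y$ and observe that $\Phi(\pi^{\mone}\act y,\vect x)$ follows by applying the induction hypothesis with $\pi^{\mone}$. The universal case is dual, or follows by $\forall = \neg\exists\neg$. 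This uses crucially that the quantifier ranges over all of $\mathcal U$ and that $x\mapsto\pi\act x$ is a bijection on $\mathcal U$.

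The only subtle point, and what I would flag as the main obstacle, is ensuring that the condition that $\vect x$ contains \emph{all} free variables of $\Phi$ really is necessary and that the induction respects it: when we pass under a binder $\Exists{y}$ the hypothesis must be applied to the enlarged tuple $(y,\vect x)$, and the equivalence $\Phi(y,\vect x)\liff\Phi(\pi\act y,\pi\act\vect x)$ is needed for arbitrary $y$, not for a fixed one. This is exactly the asymmetry highlighted in the footnote: $a=a$ iff $a=b$ would require treating $a$ as a constant rather than a variable, violating the hypothesis. Once the quantifier cases are stated correctly with all free variables tracked, the induction goes through cleanly, and the full result $\Phi(\vect x)\liff\Phi(\pi\act\vect x)$ follows.
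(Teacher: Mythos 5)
Your proof is correct and is essentially the argument the paper relies on: the paper gives no in-text proof, deferring instead to \cite[Section~4]{gabbay:fountl}, and the proof there is exactly this induction on the structure of ZFA formulas, grounded in the fact that $x\mapsto\pi\act x$ is a bijection of $\mathcal U$ preserving $=$ and $\in$. Your handling of the quantifier case via $\pi^{\mone}$ and your remark on why all free variables must appear in $\vect x$ are both on point.
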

We will appeal to equivariance repeatedly to quickly yet rigorously rename atoms, usually while retaining an inductive hypothesis.
See for instance Lemma~\ref{lemm.st.beta.1}.\footnote{This technique was used in pencil-and-paper mathematics instead of long inductive proofs, e.g. in \cite{gabbay:frelog}.}




\begin{defn}
\label{defn.triv}
Say that $X\in\mathcal U\setminus\mathbb A$ has the \deffont{trivial} action when $\supp(x)=\varnothing$ for every $x\in X$ (equivalently: when $\pi\act x=x$ for every $x\in X$ and permutation $\pi$).
\end{defn}

\begin{defn}
If $X,Y\in\mathcal U\setminus\mathbb A$ then a \deffont{function(-set)} from $X$ to $Y$ is a subset of $X\times Y$ such that $\Forall{x{\in}X}\Exists{y{\in}Y}(x,y)\in f$ and $\Forall{x,y,y'}((x,y)\in f\land (x,y')\in f)\limp y=y'$.
Write $X\fto Y$ for the set of all functions from $X$ to $Y$.
Write $X\limp Y$ for the set of all functions from $X$ to $Y$ with finite support.
\end{defn}

\begin{rmrk}
The permutation action from Definition~\ref{defn.perm} gives $f\in X\fto Y$ the \emph{conjugation} permutation action specified by $\pi\act (f(x)) = (\pi\act f)(\pi\act x)$.
\end{rmrk}

\begin{lemm}
\label{lemm.triv.func}
If $X$ and $Y$ in $\mathcal U\setminus\mathbb A$ have the trivial permutation action (Definition~\ref{defn.triv}), then so does $X\fto Y$, and $X\fto Y= X\limp Y$. 
(If underlying sets have empty support then so do functions between them.) 
\end{lemm}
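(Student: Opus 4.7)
The plan is to unpack the permutation action on functions and use the triviality hypothesis directly. First, I would verify as a lemma-in-passing that the ZFA permutation action on a pair commutes with pairing, i.e. $\pi\act(x,y)=(\pi\act x,\pi\act y)$; this is an easy induction using Definition~\ref{defn.perm} applied to a Kuratowski-style encoding such as $(x,y)=\{\{x\},\{x,y\}\}$, since $\pi\act\{u,v\}=\{\pi\act u,\pi\act v\}$.

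Next, fix $f\in X\fto Y$ and an arbitrary $\pi\in\mathbb P$. By Definition~\ref{defn.perm} applied to the set-of-pairs encoding of $f\subseteq X\times Y$, we have
$$
\pi\act f \;=\; \{\pi\act(x,y)\mid (x,y)\in f\} \;=\; \{(\pi\act x,\pi\act y)\mid (x,y)\in f\}.
$$
Since $X$ and $Y$ have trivial action (Definition~\ref{defn.triv}), $\pi\act x=x$ for every $x\in X$ and $\pi\act y=y$ for every $y\in Y$, so the displayed set equals $f$. Thus $\pi\act f=f$ for every $\pi$, which means $\varnothing$ supports $f$, and hence $\supp(f)=\varnothing$.

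Since this holds for every $f\in X\fto Y$, the function-set $X\fto Y$ has trivial permutation action. For the second claim, any $f\in X\fto Y$ has $\supp(f)=\varnothing$, which is finite, so $f\in X\limp Y$; the reverse inclusion $X\limp Y\subseteq X\fto Y$ is immediate from the definition. Hence $X\fto Y=X\limp Y$.

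There is no real obstacle here---the argument is essentially a direct unfolding of Definitions~\ref{defn.perm} and~\ref{defn.triv}. The only subtlety worth flagging is the implicit use of the ZFA encoding of ordered pairs and the compatibility of the permutation action with that encoding, which is why I would spell it out as the first step rather than sweep it under the rug.
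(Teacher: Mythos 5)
Your proof is correct. The paper in fact states Lemma~\ref{lemm.triv.func} without any proof at all, treating it as routine, and your argument is exactly the evident one: unfold the ZFA action of Definition~\ref{defn.perm} on the graph of $f$, check it commutes with the pair encoding, and use triviality of the actions on $X$ and $Y$ to get $\pi\act f=f$, whence $\supp(f)=\varnothing$ and $X\fto Y=X\limp Y$. The only alternative route suggested by the paper is via the conjugation characterisation $(\pi\act f)(x)=\pi\act\bigl(f(\pi^\mone\act x)\bigr)$ from the remark preceding the lemma, which gives the same conclusion in one line once that characterisation is granted; your version is more primitive and correctly flags the one point (compatibility of the action with the ordered-pair encoding) that the conjugation route quietly presupposes.
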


\section{Nominal models for simple type theory}
\label{sect.nominal.models}
\subsection{Nominal $\lambda$-model}
\label{subsect.lambda.model}

\begin{nttn}
\label{nttn.pi.Gamma}
Write $\pi\act\Gamma=\{\pi(a){:}\phi\mid a{:}\phi\in\Gamma\}$. 
\end{nttn}

\begin{defn}
\label{defn.lambda.model}
A \deffont{model} $\interp I$ consists of an assignment for each type environment $\Gamma$ and type $\phi$ of a finitely-supported set $\idenot{\Gamma}{\phi}$ together with the following data:
\begin{enumerate*}
\item
For every $a{:}\phi\in\Gamma$ an element $a^\iden_\phi\in\idenot{\Gamma}{\phi}$.
\item
For every constant $C$ an element $C^\iden\in\idenot{\Gamma}{\f{type}(C)}$.
\item
If $x\in\idenot{\Gamma,a{:}\phi}{\psi}$,\ an element $[a{:}\phi]x\in\idenot{\Gamma}{\phi\fto\psi}$.
\item
For $x\in\idenot{\Gamma}{\phi\fto\psi}$ and $y\in\idenot{\Gamma}{\phi}$,\ an element $x\bullet y\in\idenot{\Gamma}{\psi}$.
\item
$\idenot{\Gamma\cap\Gamma'}{\phi}=\idenot{\Gamma}{\phi}\cap\idenot{\Gamma'}{\phi}$.
\item
\label{item.subset.condition}
If $x\in\idenot{\Gamma}{\phi}$ then $\supp(x)\subseteq\dom(\Gamma)$.
\end{enumerate*}
$\interp I$ must be \emph{equivariant} in the sense that:
$$
\begin{gathered}
\pi\act\idenot{\Gamma}{\phi}=\{\pi\act x\mid x\in\idenot{\Gamma}{\phi}\}=\idenot{\pi\act\Gamma}{\phi}
\qquad
\pi\act a^\iden_\phi=(\pi(a))^\iden_\phi
\qquad
\pi\act C^\iden =C^\iden
\\
\pi\act [a{:}\phi]x=[\pi(a){:}\phi]\pi\act x
\qquad
\pi\act (x\bullet y)=(\pi\act x)\bullet(\pi\act y)
\end{gathered}
$$
We write $x[a\sm y]$ as sugar for $([a{:}\phi]x)\bullet y$. 
In addition, $\interp I$ must be a \deffont{nominal algebra for substitution} by satisfying rules \rulefont{Suba}, \rulefont{Sub\#}, \rulefont{SubApp}, and \rulefont{Sub\text{$\lambda$}}; we fill in types as appropriate (we discuss \rulefont{SubId} below):
$$
\begin{array}{l@{\quad\qquad}l@{\ }l@{\ =\ }l}
\rulefont{Suba} &&a^\iden_\phi[a\sm x] & x
\\
\rulefont{Sub\#} & a\#z\limp& z[a\sm x] & z
\\
\rulefont{SubApp} & &(z'\bullet z)[a\sm x] & (z'[a\sm x])\bullet (z[a\sm x])
\\
\rulefont{Sub\text{$\lambda$}} & c\#x\limp &([c{:}\chi]z)[a\sm x] & [c{:}\chi](z[a\sm x])
\\
\rulefont{SubId} &&z[a\sm a^\iden] & z
\end{array}
$$
\end{defn}

For the rest of this subsection fix a model $\interp I$. 

Let us break down the design of Definition~\ref{defn.lambda.model}.
Obviously names inhabit the denotation in a very direct and literal sense that $a^\iden_\phi\in\idenot{\Gamma}{\phi}$.
The reader can think of $a:\phi$ as a constant which must be interpreted `as itself' by $a^\iden_\phi$.

But $a^\iden_\phi$ also behaves like a variable:
It can be renamed by $\pi\act x$, and bound by $[a{:}\phi]x$, and it can also be substituted for.
The rules \rulefont{Suba} to \rulefont{Sub\text{$\lambda$}} do the job that valuations do in `normal' models; they replace a name $a^\iden_\phi$ by an(other) element of the model.
The significant difference is that in standard models we pick a valuation and then form a denotation; in nominal models we form a denotation and then---if we wish---substitute for the free variables.

The axioms \rulefont{Suba} to \rulefont{SubId} can be made formal in nominal algebra \cite{gabbay:nomuae}.
These particular axioms are taken from \cite{gabbay:capasn-jv}.\footnote{\rulefont{Suba} to \rulefont{SubId} soundly and completely characterise the \emph{syntactic} model of substitution.
In this paper we are also interested in non-syntactic models, so weaker axioms---and thus more models---are reasonable.  
We chose the axioms above because they are \emph{closed}, in the sense of \cite{gabbay:nomr-jv,gabbay:clonre}, which gives better computational properties (if we ever design an abstract machine using this semantics).
}
Instead of \rulefont{Sub\#} we could take a weaker axiom $b[a{\sm} x]=b$.
Conversely, we could safely add \rulefont{SubId} and thus exclude certain arguably pathological models.
The language of Definition~\ref{defn.pts.sorts} is not expressive enough to detect these choices, but the language of Definition~\ref{defn.holes.sorts} is (see Example~\ref{xmpl.hole.power}).


Aside from the inclusion of names, our notion of model resembles Henkin models, which have an applicative structure in which abstractions have a well-defined interpretation \cite{henkin:comtot}.

Just as is the case for Henkin models, Definition~\ref{defn.lambda.model} specifies what a model must look like but does not build one.
We do build a concrete model out of syntax as part of the completeness proof in Subsection~\ref{subsect.completeness}.

The equivariance conditions are standard for nominal techniques; our models must be symmetric up to permuting atoms.

Finally, conditions~1 to~6 specify the structure of a model that makes it into a model of the $\lambda$-calculus, by interpreting names (as themselves), constants, $\lambda$-abstraction (as a function of the name $a$ and the element $x$)\footnote{So $[a{:}\phi]x$ need not be precisely equal to the Gabbay-Pitts atoms-abstraction $[a]x$ from \cite{gabbay:newaas-jv}.} and a Henkin-models style application. 

\begin{defn}
\label{defn.interp}
Suppose $\Gamma\cent r:\phi$.
Define an \deffont{interpretation} $\idenot{\Gamma}{r}\in\idenot{\Gamma}{\phi}$ inductively by:
\begin{frameqn}
\idenot{\Gamma,a{:}\phi}{a} = a^\iden_\phi 
\qquad
\idenot{\Gamma}{C} = C^\iden
\qquad
\idenot{\Gamma}{rs} = \idenot{\Gamma}{r}\bullet\idenot{\Gamma}{s}
\qquad
\idenot{\Gamma}{\lam{a{:}\phi}r} = [a{:}\phi]\idenot{\Gamma,a{:}\phi}{r} 
\end{frameqn}
\end{defn}

We now come to our first soundness theorem; if a term is typable then its denotation inhabits the denotation of its type:
\begin{thrm}[\bf First soundness theorem]
\label{thrm.type.soundness}
If $\Gamma\cent r:\phi$ then $\idenot{\Gamma}{r}\in\idenot{\Gamma}{\phi}$.

As a corollary using condition~5 of Definition~\ref{defn.lambda.model}, if $\Gamma\cent r:\phi$ then $\supp(\idenot{\Gamma}{r})\subseteq\fa(r)$.
\end{thrm}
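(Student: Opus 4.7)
The plan is to proceed by induction on the derivation of $\Gamma \cent r : \phi$. The four typing rules of Figure~\ref{fig.typing.rules} are in bijective correspondence with clauses~1--4 of Definition~\ref{defn.lambda.model}, so the induction essentially writes itself and there is no genuine structural difficulty.

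First I would dispatch the two axiom cases. For the variable rule \rulefont{V}, unfolding Definition~\ref{defn.interp} gives $\idenot{\Gamma,a{:}\phi}{a} = a^\iden_\phi$, and clause~1 places this in $\idenot{\Gamma,a{:}\phi}{\phi}$. For the constant rule \rulefont{C}, $\idenot{\Gamma}{C} = C^\iden \in \idenot{\Gamma}{\f{type}(C)}$ by clause~2. Then the two inductive cases: for \rulefont{L}, the inductive hypothesis gives $\idenot{\Gamma,a{:}\phi}{r}\in\idenot{\Gamma,a{:}\phi}{\psi}$, so clause~3 yields $[a{:}\phi]\idenot{\Gamma,a{:}\phi}{r}\in\idenot{\Gamma}{\phi\fto\psi}$, which is exactly $\idenot{\Gamma}{\lam{a{:}\phi}r}$ by Definition~\ref{defn.interp}; rule \rulefont{A} is analogous using clause~4.

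For the corollary, the point to observe is that $\idenot{\Gamma}{r}$ depends only on those typings in $\Gamma$ involving atoms of $\fa(r)$. A quick side-induction on $r$ (using that $a^\iden_\phi$ and $C^\iden$ are specified independently of the ambient $\Gamma$, and that the clauses for abstraction and application are context-parametric) shows that whenever $\Gamma' \subseteq \Gamma$ with $\Gamma' \cent r : \phi$, we have $\idenot{\Gamma'}{r} = \idenot{\Gamma}{r}$. By the usual strengthening lemma for STLC there is a least such $\Gamma' \subseteq \Gamma$ with $\dom(\Gamma') = \fa(r)$. Applying the main theorem to $\Gamma'$ and then the support clause (item~\ref{item.subset.condition} of Definition~\ref{defn.lambda.model}) gives $\supp(\idenot{\Gamma}{r}) = \supp(\idenot{\Gamma'}{r}) \subseteq \dom(\Gamma') = \fa(r)$.

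The only step requiring care is the stability of the interpretation under shrinking the environment; once that is established, the theorem and corollary are simple bookkeeping against the clauses of the definition. I expect no obstacle beyond verifying this compositionality observation, which is immediate from the clauses of Definition~\ref{defn.interp} being context-parametric.
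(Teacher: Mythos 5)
Your induction for the main statement is exactly the paper's proof: the four typing rules are discharged against clauses~1--4 of Definition~\ref{defn.lambda.model} in precisely the same way. For the corollary the paper gives no argument beyond citing ``condition~5'' (the intersection condition $\idenot{\Gamma\cap\Gamma'}{\phi}=\idenot{\Gamma}{\phi}\cap\idenot{\Gamma'}{\phi}$), whereas you route through strengthening, the context-independence of $\idenot{\Gamma}{r}$, and the support clause (item~\ref{item.subset.condition} of Definition~\ref{defn.lambda.model}); your argument is correct and arguably more complete, since some appeal to the support clause is unavoidable---condition~5 alone says nothing about $\supp$---and your observation that the clauses of Definition~\ref{defn.interp} are context-parametric is exactly the step the paper leaves implicit.
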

\begin{proof}
By a straightforward induction on the derivation of $\Gamma\cent r:\phi$:
\begin{itemize*}
\item
By the definition of model (Definition~\ref{defn.lambda.model}), if $a:\phi\in\Gamma$ then $\idenot{\Gamma}{a}=a^\iden_\phi\in\idenot{\Gamma}{\phi}$, and $\idenot{\Gamma}{C}=C^\iden\in\idenot{\Gamma}{\f{type}(C)}$.
\item
Suppose $\Gamma,a{:}\phi\cent r:\psi$ so that by \rulefont{L} $\Gamma\cent \lam{a{:}\phi}r:\phi\fto\psi$ and by inductive hypothesis $\idenot{\Gamma,a{:}\phi}{r}\in\idenot{\Gamma}{\psi}$.
By assumpion $\idenot{\Gamma}{\lam{a{:}\phi}r}\in\idenot{\Gamma}{\phi\fto\psi}$.
\item
If $\idenot{\Gamma}{r}\in\idenot{\Gamma}{\phi\fto\psi}$ and $\idenot{\Gamma}{s}\in\idenot{\Gamma}{\phi}$ then $\idenot{\Gamma}{r}\bullet\idenot{\Gamma}{s}\in\idenot{\Gamma}{\phi}$.
\qedhere\end{itemize*}
\end{proof}

\subsection{Soundness for $\beta$-conversion}

\begin{lemm}
\label{lemm.st.beta.1}
Suppose $\Gamma,a{:}\phi\cent r:\psi$ and $\Gamma\cent s:\phi$, where ${a\not\in\dom(\Gamma)}$.

Then $\idenot{\Gamma}{r[a\ssm s]}=\idenot{\Gamma,a{:}\phi}{r}[a\sm\idenot{\Gamma}{s}]$.
\end{lemm}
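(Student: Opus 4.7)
The plan is to prove the equation by induction on the structure of the term $r$ (equivalently, on the derivation of $\Gamma,a{:}\phi\cent r:\psi$). The two key ingredients are the nominal substitution axioms \rulefont{Suba}--\rulefont{Sub\text{$\lambda$}} in the model, together with the corollary of Theorem~\ref{thrm.type.soundness}, which tells us that $\supp(\idenot{\Gamma}{s})\subseteq\fa(s)$; this is how the syntactic freshness side-conditions of capture-avoiding substitution get matched up with the semantic freshness side-conditions of the substitution axioms.

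First I would dispatch the base cases. If $r=a$, then $r[a\ssm s]=s$, and by \rulefont{Suba} the right-hand side $a^\iden_\phi[a\sm\idenot{\Gamma}{s}]$ reduces to $\idenot{\Gamma}{s}$. If $r=b$ with $b\neq a$, then $r[a\ssm s]=b$; by the equivariance clause of Definition~\ref{defn.lambda.model} we have $\supp(b^\iden_\phi)\subseteq\{b\}$, so $a\#b^\iden_\phi$, and \rulefont{Sub\#} gives $b^\iden_\phi[a\sm\idenot{\Gamma}{s}]=b^\iden_\phi$. The constant case $r=C$ is identical in spirit: $C^\iden$ has empty support by equivariance, so $a\#C^\iden$ and \rulefont{Sub\#} applies.

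The application case $r=r_1r_2$ is immediate: applying the inductive hypothesis to $r_1$ and $r_2$ and using \rulefont{SubApp} together with the definition of $\idenot{\Gamma}{-}$ on applications collapses both sides to the same thing.

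The abstraction case $r=\lam{c{:}\chi}r'$ is the step requiring some care, and is the main obstacle. Using $\alpha$-equivalence and equivariance (Theorem~\ref{thrm.equivar}) I may assume, without loss of generality, that $c\not\in\dom(\Gamma)\cup\{a\}\cup\fa(s)$; this is exactly the side-condition that licenses pushing the substitution under the binder in Definition~\ref{defn.sub}. From $c\#s$ and the corollary of Theorem~\ref{thrm.type.soundness} we obtain $c\#\idenot{\Gamma}{s}$, so axiom \rulefont{Sub\text{$\lambda$}} is applicable. Unfolding the definition of $\idenot{\Gamma}{\lam{c{:}\chi}r'}$, the right-hand side becomes $([c{:}\chi]\idenot{\Gamma,c{:}\chi,a{:}\phi}{r'})[a\sm\idenot{\Gamma}{s}]$, which by \rulefont{Sub\text{$\lambda$}} equals $[c{:}\chi](\idenot{\Gamma,c{:}\chi,a{:}\phi}{r'}[a\sm\idenot{\Gamma}{s}])$. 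Applying the inductive hypothesis inside the abstraction (with environment $\Gamma,c{:}\chi$ in place of $\Gamma$) rewrites this to $[c{:}\chi]\idenot{\Gamma,c{:}\chi}{r'[a\ssm s]}$, which is exactly $\idenot{\Gamma}{(\lam{c{:}\chi}r')[a\ssm s]}$ by the definitions of capture-avoiding substitution and of the interpretation on abstractions. The only genuinely delicate point throughout is the bookkeeping around the bound name $c$, and this is handled cleanly by invoking equivariance to pick a good representative of the $\alpha$-equivalence class before doing any computation.
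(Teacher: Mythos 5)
Your proposal is correct and follows essentially the same route as the paper's proof: induction on the typing derivation, dispatching the variable cases with \rulefont{Suba} and \rulefont{Sub\#}, the application case with \rulefont{SubApp}, and the abstraction case by using equivariance (Theorem~\ref{thrm.equivar}) to rename the bound atom fresh so that the corollary of Theorem~\ref{thrm.type.soundness} licenses \rulefont{Sub\text{$\lambda$}}. Your treatment is in fact slightly more explicit than the paper's (e.g.\ justifying $a\#b^\iden_\phi$ and $a\#C^\iden$ from the equivariance conditions), but there is no substantive difference.
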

\begin{proof}
By a routine induction on the derivation of $\Gamma,a{:}\phi\cent r:\psi$:
\begin{itemize*}
\item \emph{The case of \rulefont{V} for $a$.}\quad
By \rulefont{Suba} $a^\iden_\phi[a\sm\idenot{\Gamma}{s}]=\idenot{\Gamma}{s}$.
Also $a[a\ssm s]=s$.
\item \emph{The case of \rulefont{V} for $c{:}\chi\in\Gamma$.}\quad
By \rulefont{Sub\#} $c^\iden_\chi[a\sm\idenot{\Gamma}{s}]=c^\iden_\chi$.
Also $c[a\ssm s]=s$.
\item \emph{The case of \rulefont{L} for $\lam{c{:}\chi}r$.}\quad
By equivariance (Theorem~\ref{thrm.equivar}) suppose $c\not\in\dom(\Gamma)\cup\supp(\idenot{\Gamma}{s})$.\footnote{In fact by Theorem~\ref{thrm.type.soundness} $c\not\in\supp(\idenot{\Gamma}{s})$ follows from $c\not\in\dom(\Gamma)$.  But that does not matter; we can just rename $c$ `fresh', without having to engage in detailed calculations about how fresh it is.}
The result follows using \rulefont{Sub\text{$\lambda$}}.
\item
\emph{The case of \rulefont{A}} uses \rulefont{SubApp}.
\qedhere\end{itemize*}
\end{proof}

\begin{prop}[\bf The $\xi$ rule]
\label{prop.xi}
Suppose $\Gamma,a{:}\phi\cent r:\psi$ and $\Gamma,a{:}\phi\cent s:\psi$.

If $\idenot{\Gamma,a{:}\phi}{r}=\idenot{\Gamma,a{:}\phi}{s}$ then $\idenot{\Gamma}{\lam{a{:}\phi}r}=\idenot{\Gamma}{\lam{a{:}\phi}{s}}$.
\end{prop}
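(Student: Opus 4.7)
The proof I have in mind is essentially immediate from the definition of the interpretation of $\lambda$-abstraction given in Definition~\ref{defn.interp}. The plan is simply to unfold both sides and appeal to congruence of equality.

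Concretely, by Definition~\ref{defn.interp} we have
$$
\idenot{\Gamma}{\lam{a{:}\phi}r} \;=\; [a{:}\phi]\idenot{\Gamma,a{:}\phi}{r}
\qquad\text{and}\qquad
\idenot{\Gamma}{\lam{a{:}\phi}s} \;=\; [a{:}\phi]\idenot{\Gamma,a{:}\phi}{s} .
$$
Under the hypothesis $\idenot{\Gamma,a{:}\phi}{r}=\idenot{\Gamma,a{:}\phi}{s}$, applying the operation $[a{:}\phi](\text{-})$ (which by Definition~\ref{defn.lambda.model}(3) is a well-defined function from $\idenot{\Gamma,a{:}\phi}{\psi}$ to $\idenot{\Gamma}{\phi\fto\psi}$) to both sides yields the desired equality. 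I would also quickly verify that the hypothesis $\Gamma,a{:}\phi\cent r:\psi$ and $\Gamma,a{:}\phi\cent s:\psi$ (combined with the side condition $a\notin\dom(\Gamma)$ built into the notation $\Gamma,a{:}\phi$) guarantees that the abstraction $[a{:}\phi](\text{-})$ is indeed applicable, so no typing issue arises.

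There is no real obstacle here: unlike in classical valuation-based Henkin semantics, where $\xi$ would require reasoning about equality under all valuations extending the environment, in this nominal setup the name $a$ is directly part of the denotation. The hypothesis is a \emph{single} equation between two elements of $\idenot{\Gamma,a{:}\phi}{\psi}$, and $\lambda$-abstraction is interpreted by the name-abstraction operation $[a{:}\phi](\text{-})$, which is a function on these denotation sets. So the inference that $[a{:}\phi]\idenot{\Gamma,a{:}\phi}{r}=[a{:}\phi]\idenot{\Gamma,a{:}\phi}{s}$ is a one-line consequence of functional congruence. The substantive work behind this result has been front-loaded into the design of the models (in particular Definition~\ref{defn.lambda.model}), and the proposition is precisely the pay-off: the $\xi$-rule comes out for free, in contrast to the pathologies illustrated in Examples~\ref{xmpl.two.point} and~\ref{xmpl.one.point}.
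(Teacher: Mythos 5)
Your proof is correct and is essentially identical to the paper's own: both unfold $\idenot{\Gamma}{\lam{a{:}\phi}r}=[a{:}\phi]\idenot{\Gamma,a{:}\phi}{r}$ via Definition~\ref{defn.interp} and conclude by congruence of equality under the abstraction operation. The paper states this in one line; your additional remarks about well-definedness and the contrast with valuation semantics are accurate but not needed for the argument.
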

\begin{proof}
Immediate since by Definition~\ref{defn.interp} $\idenot{\Gamma}{\lam{a{:}\phi}r}=[a{:}\phi]\idenot{\Gamma,a{:}\phi}{r}$, and similarly for $s$.
\end{proof}

\begin{xmpl}
\label{xmpl.two.point}
Proposition~\ref{prop.xi} does not hold in a valuation semantics of `ordinary' models.
For instance, consider a (valuation-based) semantics with one base type $\tau$ with denotation $\{0,1\}$ (a two-element set).
Consider $x$ and $y$ and a valuation $\rho$ mapping $x$ and $y$ both to $0$.
Then $\denot{}{\rho}{x}=0=\denot{}{\rho}{y}$ but $\denot{}{\rho}{\lam{x{:}\tau}x}\neq\denot{}{\rho}{\lam{x{:}\tau}y}$.
\end{xmpl}


\begin{corr}[\bf Second soundness theorem]
If $r\beq s$ (Defn.~\ref{defn.beta.equivalence}) and $\Gamma\cent r:\phi$ then $\idenot{\Gamma}{r}=\idenot{\Gamma}{s}$.
\end{corr}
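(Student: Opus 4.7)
The plan is to induct on the derivation of $r \beq s$ from the rules of Definition~\ref{defn.beta.equivalence}, treated as being closed under reflexivity, symmetry, and transitivity. Implicit throughout is subject reduction for $\beq$ (routine from the standard substitution typing lemma for STLC), which ensures $\idenot{\Gamma}{s}$ is defined as soon as $\idenot{\Gamma}{r}$ is. The closure cases (reflexivity, symmetry, transitivity) are immediate from the corresponding properties of set-theoretic equality. The congruence case $\rulefont{CongApp}$ is immediate from the inductive hypothesis combined with the fact that $\bullet$ is a function, so equal inputs produce equal outputs. The congruence case $\rulefont{\xi}$ is exactly the content of Proposition~\ref{prop.xi}.

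The only case with substantive content is $\rulefont{\beta}$. Here we must show $\idenot{\Gamma}{(\lam{a{:}\phi}r)t} = \idenot{\Gamma}{r[a \ssm t]}$, given $\Gamma,a{:}\phi \cent r:\psi$ and $\Gamma \cent t:\phi$. Unfolding Definition~\ref{defn.interp} together with the syntactic sugar $x[a \sm y] := ([a{:}\phi]x) \bullet y$, the left-hand side rewrites as
$$\idenot{\Gamma}{(\lam{a{:}\phi}r)t} = \bigl([a{:}\phi]\idenot{\Gamma,a{:}\phi}{r}\bigr) \bullet \idenot{\Gamma}{t} = \idenot{\Gamma,a{:}\phi}{r}\bigl[a \sm \idenot{\Gamma}{t}\bigr],$$
and Lemma~\ref{lemm.st.beta.1} then identifies this with $\idenot{\Gamma}{r[a \ssm t]}$, closing the case.

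The main obstacle---such as it is---is not computational but administrative: one must arrange via $\alpha$-renaming and equivariance (Theorem~\ref{thrm.equivar}) that the bound atom $a$ can be chosen fresh for $\Gamma$, so that the side-condition $a \notin \dom(\Gamma)$ of Lemma~\ref{lemm.st.beta.1} and the capture-avoidance requirement of the meta-level substitution are simultaneously satisfied, and one must check that this choice does not affect the interpretation. Both of these are standard nominal maneuvers already carried out in the proof of Lemma~\ref{lemm.st.beta.1}, so no new technical input is required beyond invoking them.
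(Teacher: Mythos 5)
Your proof is correct and follows essentially the same route as the paper, whose own proof is a one-line appeal to ``routine sets calculations'' using exactly the two ingredients you identify: Lemma~\ref{lemm.st.beta.1} for the \rulefont{\beta} case (after unfolding the sugar $x[a\sm y]=([a{:}\phi]x)\bullet y$) and Proposition~\ref{prop.xi} for the \rulefont{\xi} case, with the remaining closure and congruence cases being immediate. One small caveat: your parenthetical appeal to ``subject reduction'' does not literally justify typability of $s$, since subject \emph{expansion} fails for STLC (e.g.\ $y\beq(\lam{a{:}\tau}y)(bb)$ with $bb$ untypable); this is really an implicit well-formedness assumption of the corollary's statement—shared with the paper—rather than something your induction establishes.
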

\begin{proof}
By some routine sets calculations, using Lemma~\ref{lemm.st.beta.1} and Proposition~\ref{prop.xi}.
\end{proof}

\subsection{Completeness}
\label{subsect.completeness}

\begin{defn}
\label{defn.ment}
Write $\interp I;\Gamma\ment r{=}s$ when there exists $\phi$ (which is unique if it exists) such that $\Gamma\cent r{:}\phi$ and $\Gamma\cent s{:}\phi$,\ and $\idenot{\Gamma}{r}{=}\idenot{\Gamma}{s}$ and $\idenot{\Gamma}{r}\in\idenot{\Gamma}{\phi}$. 
We call the \deffont{(typed) equality} $\Gamma\cent r{=}s$ \deffont{valid} in $\interp I$.
\end{defn}


We need one technical fact about nominal sets, for Theorem~\ref{thrm.st.complete}: 
\begin{lemm}
\label{lemm.beeq.fresh.rep}
Suppose $\Gamma\cent r:\phi$.
If $a\not\in\supp(\beeq{r})$ then there exists $s$ such that $\theory T;\Gamma\cent r=s$ and $a\not\in\fa(s)$.
\end{lemm}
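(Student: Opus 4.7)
The core idea exploits the defining property of support: since $a\notin\supp(\beeq{r})$, any permutation fixing $\supp(\beeq{r})$ pointwise also fixes $\beeq{r}$. In particular, for any fresh $b\notin\supp(\beeq{r})\cup\fa(r)\cup\{a\}$, the swap $(a\ b)$ lies in $\fix(\supp(\beeq{r}))$, so $(a\ b)\act\beeq{r}=\beeq{r}$, which (using equivariance of $\beq$) unpacks to $(a\ b)\act r\beq r$. If $a\notin\fa(r)$ we are already done with $s:=r$, so assume $a\in\fa(r)$ (hence $a\in\dom(\Gamma)$).

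The naive attempt is $s:=(a\ b)\act r$: it satisfies $s\beq r$ and $\fa(s)=(\fa(r)\setminus\{a\})\cup\{b\}$, so $a\notin\fa(s)$. However, equivariance only yields $(a\ b)\act\Gamma\cent s:\phi$, and this differs from $\Gamma\cent s:\phi$ precisely when $a\in\dom(\Gamma)$---which is the case of interest. Reconciling the permutation-based renaming with a fixed typing context $\Gamma$ is the main obstacle.

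To repair this I would instead take $s$ to be the $\beta$-normal form of $r$, importing standard STLC metatheory: strong normalization gives existence, confluence gives uniqueness of the normal form $\f{nf}(r)$, and subject reduction gives $\Gamma\cent s:\phi$ together with $\fa(s)\subseteq\fa(r)$ for free. The swap argument now applies to $s$ rather than to $r$: by equivariance and uniqueness of normal forms,
$(a\ b)\act s=\f{nf}((a\ b)\act r)=\f{nf}(r)=s,$
so $\fa(s)$ is setwise fixed by $(a\ b)$. Hence $a\in\fa(s)\liff b\in\fa(s)$, and since $b\notin\fa(r)\supseteq\fa(s)$ we conclude $a\notin\fa(s)$, as required; validity of $\Gamma\cent r=s$ in $\theory T$ then follows because $s\beq r$ and the ambient theory $\theory T$ extends $\beq$.
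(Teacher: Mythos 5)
Your proof is correct, and it takes a genuinely different route from the paper: the paper discharges this lemma with a one-line citation to an external general fact about nominal term models (\cite[Lemma~7.6.2]{gabbay:nomtnl}), whereas you give a concrete, self-contained argument specific to STLC. Your diagnosis of the naive attempt is accurate and is the real content here: $s:=(a\ b)\act r$ does satisfy $s\beq r$ and $a\notin\fa(s)$, but when $a\in\fa(r)$ (hence $a\in\dom(\Gamma)$) it is only typable in $(a\ b)\act\Gamma$, and one cannot repair this by weakening/strengthening because $b\in\fa(s)$. Your fix---take $s$ to be the $\beta$-normal form, so that equivariance of reduction plus uniqueness of normal forms forces $(a\ b)\act s=s$, whence $a\in\fa(s)\liff b\in\fa(s)$ and both fail---is sound, and subject reduction keeps $s$ typed in the original $\Gamma$. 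What the paper's citation buys that your argument does not: the abstract nominal lemma does not depend on strong normalisation or confluence, so it survives the generalisation to completeness for arbitrary equational theories $\theory T$ that the authors flag after Theorem~\ref{thrm.st.complete}, where canonical normal forms need not exist. What your argument buys in exchange is that it is elementary and checkable entirely within standard STLC metatheory, at the cost of importing SN, Church--Rosser and subject reduction, which this paper never states or proves.
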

\begin{proof}
Using \cite[Lemma~7.6.2]{gabbay:nomtnl}.
\end{proof}

\begin{thrm}
\label{thrm.st.complete}
$\Gamma\ment r=s$ implies $r\beq s$.
\end{thrm}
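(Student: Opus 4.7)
The plan is to prove completeness by constructing a syntactic \emph{term model} $\interp T$ whose carrier is $\beq$-equivalence classes of typable terms, and then specialising the assumed validity $\Gamma\ment r=s$ (i.e.\ $\interp I;\Gamma\ment r=s$ for every $\interp I$) to this particular model. This is the standard Lindenbaum-style strategy adapted to the nominal setting.

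Concretely I would set $\tdenot{\Gamma}{\phi}=\{\beeq{r}\mid \Gamma\cent r:\phi\}$, with
$a^\tden_\phi=\beeq{a}$,\
$C^\tden=\beeq{C}$,\
$[a{:}\phi]\beeq{r}=\beeq{\lam{a{:}\phi}r}$,\
$\beeq{r}\bullet\beeq{s}=\beeq{rs}$,\
and substitution $\beeq{r}[a\sm\beeq{s}]=\beeq{r[a\ssm s]}$. Well-definedness on representatives is immediate from \rulefont{CongApp}, \rulefont{\xi}, and the fact that capture-avoiding substitution respects $\beq$. Equivariance of $\interp T$ is routine, since the permutation action on $\mathcal U$ satisfies $\pi\act\beeq{r}=\beeq{\pi\act r}$. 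Clause~6 of Definition~\ref{defn.lambda.model} is easy: for typable $r$ one has $\supp(\beeq{r})\subseteq\fa(r)\subseteq\dom(\Gamma)$.

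The main obstacle is checking the substitution axioms, and in particular \rulefont{Sub\#} and \rulefont{Sub\text{$\lambda$}}, whose hypotheses are freshness conditions $a\#\beeq{r}$ or $c\#\beeq{s}$. The syntactic substitution of Definition~\ref{defn.sub} is phrased in terms of $\fa$, not of $\#$, so the two notions must be reconciled representative-by-representative. This is exactly the role of Lemma~\ref{lemm.beeq.fresh.rep}: given $a\#\beeq{r}$, it furnishes a $\beq$-equivalent $r'$ with $a\notin\fa(r')$, after which the relevant clause of Definition~\ref{defn.sub} closes the case. Axioms \rulefont{Suba}, \rulefont{SubApp}, and \rulefont{SubId} reduce directly to their syntactic counterparts without this detour.

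Once $\interp T$ is verified to be a model, a straightforward induction on $r$ shows $\tdenot{\Gamma}{r}=\beeq{r}$, since each clause of Definition~\ref{defn.interp} matches the defining clause of $\interp T$. Specialising the hypothesis $\Gamma\ment r=s$ to $\interp T$ then yields $\tdenot{\Gamma}{r}=\tdenot{\Gamma}{s}$, i.e.\ $\beeq{r}=\beeq{s}$, i.e.\ $r\beq s$, as required.
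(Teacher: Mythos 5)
Your proposal is correct and follows essentially the same route as the paper: a term model of $\beq$-equivalence classes with the same interpretation of atoms, abstraction and application, with Lemma~\ref{lemm.beeq.fresh.rep} doing exactly the work you assign it in verifying the freshness-conditioned axioms \rulefont{Sub\#} and \rulefont{Sub\text{$\lambda$}}, and the conclusion obtained by specialising the hypothesis to this model. The only cosmetic difference is that in Definition~\ref{defn.lambda.model} the substitution $x[a\sm y]$ is sugar for $([a{:}\phi]x)\bullet y$, so your clause $\beeq{r}[a\sm\beeq{s}]=\beeq{r[a\ssm s]}$ is a consequence of \rulefont{\beta} rather than a definitional choice.
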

\begin{proof}
We take as our model $\interp I$ where $\idenot{\Gamma}{r}=\beeq{r}$ and $\idenot{\Gamma}{\phi}=\{\beeq{r}\mid \Gamma\cent r:\phi\}$, and:
\begin{itemize*}
\item
If $a:\phi\in\Gamma$ then we take $a^\iden_\phi=\beeq{a}\in\idenot{\Gamma}{\phi}$.
\item
If $\beeq{r}\in\idenot{\Gamma,a{:}\phi}{\psi}$ then we take $[a{:}\phi]\beeq{r}=\beeq{\lam{a{:}\phi}r}$.
\item
Similarly, we take $\beeq{r}\bullet \beeq{s}=\beeq{rs}$.
\end{itemize*}
It is a fact that $r\beq r'$ and $s\beq s'$ imply $\lam{a{:}\phi}r\beq \lam{a{:}\phi}r'$ and $rs\beq r's'$, and it follows that the definition above is well-defined.

We must also check validity of rules \rulefont{Suba} to \rulefont{Sub\text{$\lambda$}}. 
We consider two cases:
\begin{itemize*}
\item
\emph{The case of \rulefont{Suba}.}\quad
It is a fact that $(\lam{a{:}\phi}a)s\beq s$. 
\item
\emph{The case of \rulefont{Sub\#}.}\quad
Suppose $a\#\beeq{t}$.
By Lemma~\ref{lemm.beeq.fresh.rep} there exists $t'\beq t$ such that $a\not\in\fa(t')$.
So $t'[a\ssm r]=t'$ and thus $\theory T;\Gamma\cent t'[a\ssm r]= t'$.
It follows 
that $\beeq{t[a\ssm r]}=\beeq{t}$. 
\end{itemize*}

Furthermore, by construction if $\beeq{r}=\beeq{s}$ then $r\beq s$. 
\end{proof}
The proof of Theorem~\ref{thrm.st.complete} resembles the proof of completeness for Henkin models, with moderate changes to handle the `nominal' models.
Our models are not necessarily extensional (that is, we do not insist that $r=\lam{a}(ra)$ for $a$ not free in $r$) whereas Henkin semantics usually are \cite{henkin:comtot}; nevertheless it is reasonable to think of this as `Henkin semantics with names'.
A survey of complete non-extensional semantics for STLC is in \cite{benzmuller:higose}.

Theorem~\ref{thrm.st.complete} is simpler than it could be; we could generalise it to completeness for arbitrary theories (i.e. we allow a set of equality axioms and prove completeness for the class of models that validate those axioms).  
We expect this generalisation to be an easy replay of the existing proof.
We do not do this because the simpler case already illustrates the main points, and has useful features which we can now explore.

\subsection{Well-pointedness}
\label{subsect.well-pointed}

In Proposition~\ref{prop.xi} and Example~\ref{xmpl.two.point} we saw that our nominal Henkin models have a desirable property that `ordinary' models do not.
We now come to another; to state it we need a definition:
\begin{defn}
A \deffont{homomorphism} $F$ from $\interp I$ to $\interp J$ is a collection of functions $F^\phi_{\Gamma}$ mapping $\idenot{\Gamma}{\phi}$ to $\jdenot{\Gamma}{\phi}$ which are:
\begin{itemize*}
\item
\emph{Equivariant} in the sense that $\pi\act (F^\phi_{\Gamma}(x))=F^\phi_{\pi\act\Gamma}(\pi\act x)$ (Notation~\ref{nttn.pi.Gamma}).
\item
\emph{Natural} in the sense that $F$ commutes with atoms, constants, abstraction, and $\bullet$.
So for example, $F^\phi_{\Gamma,a{:}\phi}(a^\iden_\phi)=a^\jden_\phi$ and $F^{\psi\fto\phi}_{\Gamma}([a{:}\psi]x)=[a{:}\psi]F^\phi_{\Gamma,a{:}\phi}(x)$. 
\end{itemize*}
\end{defn}

The notion of validity from Definition~\ref{defn.ment} is \emph{local} in that it checks validity at one model.
There is also a \emph{global} notion, which checks validity at the model and all `larger' ones:
\begin{defn}
Suppose $\Gamma\cent r{:}\phi$ and $\Gamma\cent s{:}\phi$.
Say that $\interp I;\Gamma\gloment r=s$ when $\interp J;\Gamma\ment r=s$ in the sense of Definition~\ref{defn.ment} for every $\interp J$ such that there exists a homomorphism $F:\interp I\to\interp J$. 
\end{defn}

\begin{lemm}
\label{lemm.sub.morph.commute}
Suppose $\Gamma\cent r:\phi$ and $F:\interp I\to\interp J$ is a homomorphism.
Then 
$F^\phi_{\Gamma}(\idenot{\Gamma}{r})=\jdenot{\Gamma}{r}$.
\end{lemm}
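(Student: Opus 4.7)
The plan is to proceed by a straightforward induction on the derivation of $\Gamma\cent r:\phi$ (equivalently, on the structure of $r$), using the naturality clauses of the homomorphism $F$ to push $F^\phi_\Gamma$ through each syntactic constructor and then invoke the inductive hypothesis.

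First I would dispatch the two base cases. For $r=a$ with $a{:}\phi\in\Gamma$, by Definition~\ref{defn.interp} we have $\idenot{\Gamma}{a}=a^\iden_\phi$, and naturality of $F$ on atoms gives $F^\phi_{\Gamma}(a^\iden_\phi)=a^\jden_\phi=\jdenot{\Gamma}{a}$. For $r=C$, we have $\idenot{\Gamma}{C}=C^\iden$ and naturality on constants gives $F^\phi_\Gamma(C^\iden)=C^\jden=\jdenot{\Gamma}{C}$.

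Next, I would treat the two inductive cases. For application $r=r_1 r_2$ with $\Gamma\cent r_1:\psi\fto\phi$ and $\Gamma\cent r_2:\psi$, unfold $\idenot{\Gamma}{r_1 r_2}=\idenot{\Gamma}{r_1}\bullet \idenot{\Gamma}{r_2}$, push $F$ inside using naturality of $F$ with respect to $\bullet$, and then apply the inductive hypothesis at $r_1$ and $r_2$ to land on $\jdenot{\Gamma}{r_1}\bullet\jdenot{\Gamma}{r_2}=\jdenot{\Gamma}{r_1 r_2}$. For abstraction $r=\lam{a{:}\psi}r'$ (with $a\not\in\dom(\Gamma)$), unfold $\idenot{\Gamma}{\lam{a{:}\psi}r'}=[a{:}\psi]\idenot{\Gamma,a{:}\psi}{r'}$, use naturality of $F$ with respect to abstraction to obtain $[a{:}\psi]F^{\phi'}_{\Gamma,a{:}\psi}(\idenot{\Gamma,a{:}\psi}{r'})$, and then apply the inductive hypothesis at $r'$ over the extended context $\Gamma,a{:}\psi$ to finish.

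I do not anticipate any significant obstacle: each case is a one-line calculation combining a clause of Definition~\ref{defn.interp} with the matching naturality clause of $F$, and then the inductive hypothesis. The only mild care needed is in the abstraction case, where one should note that the inductive hypothesis is applied over the extended environment $\Gamma,a{:}\psi$ (so the side condition $a\not\in\dom(\Gamma)$ inherited from the \rulefont{L} derivation is used implicitly); equivariance of $F$ is not needed for this lemma and can be set aside.
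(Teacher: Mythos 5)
Your proposal is correct and follows exactly the paper's proof, which is stated as ``a routine induction on the derivation of $\Gamma\cent r:\phi$, using naturality''; you have simply spelled out the four cases that the paper leaves implicit. Your closing observation that only naturality (and not equivariance) of $F$ is needed also matches the paper's proof sketch.
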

\begin{proof}
By a routine induction on the derivation of $\Gamma\cent r:\phi$, using naturality.
\end{proof}

\begin{thrm}[\bf Well-pointedness]
\label{thrm.well-pointed}
Suppose $\Gamma\cent r:\phi$ and $\Gamma\cent s:\phi$. 
Then $\interp I;\Gamma\ment r=s$ if and only if $\interp I;\Gamma\gloment r=s$. 
\end{thrm}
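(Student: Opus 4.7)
The plan is to prove the two directions separately, with the right-to-left implication being essentially immediate and the left-to-right implication reducing directly to Lemma~\ref{lemm.sub.morph.commute}.

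For $(\Leftarrow)$, I would observe that the identity collection of functions $\id^\phi_\Gamma : \idenot{\Gamma}{\phi} \to \idenot{\Gamma}{\phi}$ is trivially a homomorphism $\interp I \to \interp I$: equivariance and naturality are immediate since $\id$ preserves everything on the nose. Therefore, specialising the universal quantifier in the definition of $\gloment$ to $\interp J = \interp I$ with $F = \id$ gives $\interp I;\Gamma\ment r=s$.

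For $(\Rightarrow)$, suppose $\interp I;\Gamma\ment r=s$, which means there is a (unique) $\phi$ with $\Gamma\cent r{:}\phi$ and $\Gamma\cent s{:}\phi$ and $\idenot{\Gamma}{r}=\idenot{\Gamma}{s}\in\idenot{\Gamma}{\phi}$. Let $\interp J$ be any model and $F : \interp I \to \interp J$ a homomorphism. By the first soundness theorem (Theorem~\ref{thrm.type.soundness}) we have $\jdenot{\Gamma}{r}\in\jdenot{\Gamma}{\phi}$ and $\jdenot{\Gamma}{s}\in\jdenot{\Gamma}{\phi}$, so the typing side-condition of $\interp J;\Gamma\ment r=s$ is met. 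Applying Lemma~\ref{lemm.sub.morph.commute} twice then gives
$$
\jdenot{\Gamma}{r} \;=\; F^\phi_\Gamma\bigl(\idenot{\Gamma}{r}\bigr) \;=\; F^\phi_\Gamma\bigl(\idenot{\Gamma}{s}\bigr) \;=\; \jdenot{\Gamma}{s},
$$
where the middle equality uses the assumed local validity at $\interp I$ and the outer two equalities are instances of the lemma. Hence $\interp J;\Gamma\ment r=s$, and since $\interp J$ and $F$ were arbitrary, $\interp I;\Gamma\gloment r=s$.

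There is no real obstacle: the theorem is a direct corollary of Lemma~\ref{lemm.sub.morph.commute}, which is where the genuine work lives (and which is itself a routine induction on typing derivations using naturality of $F$ at atoms, constants, abstraction, and $\bullet$). The only minor thing to be careful about is confirming that the identity is a homomorphism in the $(\Leftarrow)$ direction, and that the typing well-formedness clause in Definition~\ref{defn.ment} transfers from $\interp I$ to $\interp J$, both of which are trivial.
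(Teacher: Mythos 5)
Your proof is correct and follows essentially the same route as the paper's: the identity homomorphism for the easy direction, and Lemma~\ref{lemm.sub.morph.commute} applied on both sides of the assumed equality for the converse. The only (harmless) additions are your explicit check of the typing side-condition via Theorem~\ref{thrm.type.soundness} and the observation that the identity is a homomorphism, both of which the paper leaves implicit.
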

\begin{proof}
By considering the \emph{identity} homomorphism from $\interp I$ to itself, mapping $x\in\idenot{\Gamma}{\phi}$ to itself, it is clear that $\interp I;\Gamma\gloment r=s$ implies $\interp I;\Gamma\ment r=s$.

Conversely, suppose $\interp I;\Gamma\ment r=s$ and suppose $F$ is a homomorphism from $\interp I$ to $\interp J$.
By assumption $\idenot{\Gamma}{r}=\idenot{\Gamma}{s}$.
It follows by Lemma~\ref{lemm.sub.morph.commute} that $\jdenot{\Gamma}{r}=F^\phi_\Gamma(\idenot{\Gamma}{r})=F^\phi_\Gamma(\jdenot{\Gamma}{s})=\jdenot{\Gamma}{s}$.
\end{proof}

\begin{xmpl}
\label{xmpl.one.point}
Theorem~\ref{thrm.well-pointed} fails for traditional models.
For instance, consider a functional model in which all terms are equal because every type has just one element.
So $\Gamma\cent r=s$ is locally true, but not globally true.

Nominal Henkin semantics exclude this, because they have elements to interpret variables.
It is impossible to compress them all down to one element, as we did in the previous paragraph for `ordinary' models. 
\end{xmpl}

\section{Existential variables}
\label{sect.existential.variables}

Nominal terms introduced to nominal techniques the idea of two levels of variable; atoms (as above) and \emph{unknowns} $X$, which are existential variables and in \cite{gabbay:nomu-jv} were used in a unification algorithm. 
The first author proposed combining nominal unknowns with non-trivial logical theories, e.g. with first-order logic \cite{gabbay:oneaah,gabbay:oneaah-jv}.
Since in this paper we have a nominal semantics for the STLC, it is natural to extend Definition~\ref{defn.pts.sorts} with nominal unknowns and so to add existential variables.

The motivation for doing this is that STLC underlies many interesting logics and programming languages, so that our semantics and syntax with existential variables have potential---not exploited in this paper but motivating the constructions---to provide syntax and semantics for `incomplete terms'.
In common with all other such treatments, a difficulty is the delicacy of maintaining well-typedness under instantiation (which for nominal terms may be capturing; see Remark~\ref{rmrk.why.down}).
Our solution has elements of previous work, but it retains a distinct identity and remains typically `nominal'.

We will use \emph{permissive} nominal terms~\cite{gabbay:perntu-jv}, which improve on the theory of $\alpha$-equivalence of nominal terms by allowing us to `just quotient' syntax (nominal terms require a freshness context and freshness context update, which are harder to manage in the presence of non-trivial equalities/reductions on terms). 
\subsection{Syntax}
\label{subsect.syntax} 

\begin{defn}
\label{defn.partition}
Fix a partition of the set of atoms from Definition~\ref{defn.atoms} into two disjoint countably infinite sets $\atomsdown$ and $\atomsup$, so that $\mathbb A=\atomsdown\uplus\atomsup$ 
\end{defn}

Splitting $\mathbb A$ in two is key to the syntax, but not to the semantics: the notion of model in Definition~\ref{defn.holes.lambda.model} is identical to Definition~\ref{defn.lambda.model} and is based on finitely-supported nominal sets as usual.
Only the syntax uses the more powerful notions of $\atomsdown$ and $\atomsup$ (and it is more powerful; see e.g. Example~\ref{xmpl.hole.power}).
This echoes the formal distinction between `names that exist to be bound' and `names that exist to be free' used in some treatments of logic \cite{heijenoort:fregsb,smullyan:firol}, though this distinction is not so rigid here; e.g. a `standalone atom' $a$ can appear either from $\atomsdown$ or $\atomsup$ and via a permutation or substitution `migrate' between them.


\begin{defn}
Fix a countably infinite set of \deffont{unknowns}.
$X$, $Y$, $Z$ will range over distinct unknowns.
\end{defn}

\begin{defn}
\label{defn.holes.sorts}
Types are as in Definition~\ref{defn.pts.sorts}.
Terms are defined by:
\begin{displaymath}
\begin{array}{r@{\ }l}
\rtm      ::= & a \mid C \mid X[b_i{\ssm} s_i]_{i{=}1}^n \mid \lam{c{:}\phi}\stm \mid \rtm\stm 
\qquad (\{b_i\mid 1{\leq}i{\leq}n\}\subseteq\atomsdown,\ \ c\in\atomsup)
\end{array}
\end{displaymath}
$[b_i\ssm s_i]$ is a \deffont{(level 1) substitution}, which is a finite partial function from atoms to terms, mapping $b_i$ to $s_i$ and undefined elsewhere (so finite substitutions are directly in this syntax, just like finite permutations on unknowns $\pi\act X$ are directly part of nominal terms).
We call $X[b_i{\ssm}s_i]_1^n$ a \deffont{moderated unknown}.
\end{defn}
The condition $c\in\atomsup$ may seem odd---so $\lam{a{:}\phi}a$ is not well-formed syntax if $a\in\atomsdown$---but since $a$ is supposed to be bound we can intuitively always $\alpha$-convert it to be in $\atomsup$.  This is a useful `hygiene' simplification, since just by looking at an atom $a$ we can tell if it could be bound ($a\in\atomsup$) or captured by an instantiation ($a\in\atomsdown$).
We can always move between one world and the other using a moderating substitution, as in $\lam{a{:}\phi}X[b{\ssm} a]$ where $a\in\atomsup$ and $b\in\atomsdown$.

\begin{xmpl}
\begin{itemize*}
\item
\emph{An incomplete term.}\quad
The typing $a,b{:}\phi,X{:}\phi\cent \lam{a{:}\phi}X[b{\ssm}a]:\phi\fto\phi$ where $a\in\atomsup$ and $b\in\atomsdown$ represents an incomplete typing `$\lam{x{:}\phi}t$ where $t$ has type $\phi$'. 
This is an term for a function on one argument.

Looking forward to the level 1 and 2 substitutions in Definitions~\ref{defn.holes.sub} and~\ref{defn.st.sub.2}, we will be able to complete $\lam{a{:}\phi}X$ to a complete term, by applying the substitution $[X\ssm b]$.
We get the identity $\lam{a{:}\phi}a$.
Without unknowns, both the incomplete and the complete terms would be represented by $\lam{a{:}\phi}fa$ for a higher-order $f:\phi\fto\phi$ (whereas $X$ has type $\phi$).
\item
\emph{An incomplete HOL predicate.}\quad
Assume base types $\iota$ and $o$ and constants $\dot\limp:o\fto o\fto o$ and $\dot\forall:(\iota\fto o)\fto o$.
The typing $X:o,\ Y:\iota,\ b{:}\iota\cent (\dot\forall\lam{b{:}\iota}X)\dot\limp X[b\ssm Y]:o$ represents an incomplete HOL predicate. 

Without level 2 variables, both the incomplete and the complete terms would be represented by $b{:}\iota,\ f:\iota\fto o\cent (\dot\forall\lam{b{:}\iota}fa)\dot\limp fa$.
\end{itemize*} 
\end{xmpl} 

\begin{defn}
\label{defn.perm.action}
Suppose a permutation $\pi$ (Definition~\ref{defn.swap}) is such that $\f{nontriv}(\pi)\subseteq\atomsup$. 
Define a \deffont{permutation action} $\pi\act r$ on terms by:
$$
\begin{array}{r@{\ }l@{\qquad}r@{\ }l@{\qquad}r@{\ }l}
\pi\act a=&\pi(a) 
&
\pi\act\lam{a{:}\phi}r=&\lam{\pi(a){:}\phi}\pi\act r
&
\pi\act C=&C
\\
\pi\act (rs)=&(\pi\act r)(\pi\act s)
&
\pi\act (X\,[b_i\ssm s_i]_i) =&X\,[b_i\ssm \pi\act s_i]_i
\end{array}
$$
\end{defn}

\begin{rmrk}
\label{rmrk.simple}
Intuitively, the reason that we restrict $\nontriv(\pi)$ to atoms in $\atomsup$ is so that we only rename the atoms that can be $\lambda$-abstracted.
This restriction could be removed, and the syntax made `more equivariant', but at the price of complicating the syntax $X[b_i\ssm s_i]_1^n$ to $(\pi\act X)[b_i\ssm s_i]_1^n$ so that we could write $\pi\act (X\,[b_i\ssm s_i]_i = (\pi\act X)[\pi(b_i){\ssm}\pi\act s_i]$.
There would be nothing wrong with this---it just makes our basic syntax slightly more complicated.
Since there is no change in expressivity, we leave this out.

We could also emulate $\pi$ using the substitution $[b_i\ssm s_i]_1^n$, but then we must add \rulefont{SubId}.
\end{rmrk}

\begin{defn}
\label{defn.holes.congruence}
Call a binary relation $\somerel$ on terms a \deffont{congruence} when it is closed under the rules \rulefont{CongApp} and \rulefont{\xi} in Definition~\ref{defn.congruence} and in addition:\footnote{The condition $\{b_1,\dots,b_n\}\subseteq\atomsdown$ is there to guarantee that $X[b_i{\ssm}s_i]$ and $X[b_i{\ssm}s_i']$ are well-formed terms.}
$$
\begin{prooftree}
s_i\somerel s_i'\quad (1{\leq}i{\leq}n,\ \{b_1,\dots,b_n\}\subseteq\atomsdown)
\justifies
X[b_i{\ssm}s_i]_1^n \somerel X[b_i{\ssm}s_i']_1^n
\using\rulefont{CongX}
\end{prooftree}
$$
\end{defn}

\begin{defn}
$\alpha$-equivalence $r\aeq s$ is the least congruence such that if $b\in\atomsup\setminus\fa(r)$ then $\lam{a{:}\phi}r\aeq\lam{b{:}\phi}(b\ a)\act r$.
Henceforth we quotient terms by $\alpha$-equivalence.
\end{defn}

\begin{defn}
\label{defn.hole.free.atoms}
Define \deffont{free atoms} $\fa(\rtm)$ and \deffont{free unknowns} $\fv(\rtm)$ by:
\begin{displaymath}
\begin{array}{r@{\ }l@{\qquad}r@{\ }l@{\quad}r@{\ }l}
\fa(a) = & \{ a \}                               
& 
\fa(\lam{a{:}\phi}r) = & \fa(r) \setminus \{ a \}
&
\fa(C) = & \varnothing
\\
\fa(\rtm\stm) = & \fa(\rtm) \cup \fa(\stm) 
&
\fa(X[b_i{\ssm}s_i]_1^n) = & (\atomsdown{\setminus}\{b_i\mid i\})\cup\bigcup_i\fa(s_i)
\\[2ex]
\fv(a) = & \varnothing
& 
\fv(\lam{a{:}\rtm}\stm) = & \fv(\stm) \cup \fv(\rtm) 
&
\fv(C) = & \varnothing
\\
\fv(\rtm\stm) = & \fv(\rtm) \cup \fv(\stm)  
&
\fv(X[a_i{\ssm} s_i]_1^n) = & \{ X \}\cup\bigcup_i\fv(s_i) 
\end{array}
\end{displaymath}
\end{defn}

\begin{defn}
\label{defn.holes.sub}
We give terms a capture-avoiding \deffont{substitution action} $r[b_i\ssm s_i]_1^n$ as follows:
$$
\begin{array}{r@{\ }l@{\qquad}l}
b_j[b_i{\ssm} s_i]_1^n=&s_j & (1{\leq}j{\leq}n)
\\
a[b_i{\ssm} s_i]_1^n=&a & (a\not\in\{b_i\mid 1{\leq}i{\leq}n\})
\\
C[b_i{\ssm} s_i]_1^n=&C
\\
X[b_i{\ssm} s_i]_{i\in A}[b_i{\ssm} s_i]_{i\in B}=&
X[\begin{array}[t]{l} \!\!\! (b_i{\ssm}s_i)_{i\in B{\setminus}A,\ b_i{\in}\atomsdown}, \\ (b_i{\ssm}s_i[b_j{\ssm}s_j]_{j\in B})_{i\in A}] \end{array}
&(B=\{1,\dots,n\})
\\
(\lam{c{:}\phi}r)[b_i{\ssm}s_i]_1^n=&\lam{c{:}\phi}(r[b_i{\ssm}s_i]_1^n) &(c\in\atomsup\setminus(\bigcup_i\{b_i\}\cup\fa(s_i)))
\\
(r'r)[b_i{\ssm} s_i]_1^n=&(r'[b_i{\ssm}s_i]_1^n)(r[b_i{\ssm}s_i]_1^n)
\end{array}
$$ 
\end{defn}
\noindent Note above that if $b_i\in\atomsup$ then it gets garbage-collected (eliminated) on $X$, as we see from the condition `$b_i\in\atomsdown$' in `$i\in B{\setminus}A,\ b_i\in\atomsdown$'. 
So for instance $X[b{\ssm}b'][b'{\ssm}b'']=X[b{\ssm}b'',b'{\ssm}b'']$ where $b,b',b''\in\atomsdown$ and $X[b{\ssm}a][a{\ssm}b'']=X[b{\ssm}b'']$ where $a\in\atomsup$.

\begin{defn}
\label{defn.holes.beta.equivalence}
Let \deffont{$\beta$-equivalence} $- \beq -$ be the least congruence (Definition~\ref{defn.holes.congruence}) such that
$(\lam{a{:}\phi}\rtm) \ttm\beq \rtm[a \ssm \ttm]$.
\end{defn}

\subsection{Environments and typing}
\label{subsect.with.holes.typing}

\begin{defn}
\label{defn.holes.environments}
A \deffont{type environment} $\Gamma$ is a set of \deffont{atomic typings} $a:\phi$ or $X:\phi$ which is \emph{functional} in the sense that if $a:\phi$ and $a:\phi'$ then $\phi=\phi'$, and similarly for $X$ (i.e. `add $X{:}\phi$ to Definition~\ref{defn.st.environments}').

Define $\dom(\Gamma)=\{a\mid \Exists{\phi}a{:}\phi\in\Gamma\}\cup\{X\mid\Exists{\phi}X{:}\phi\in\Gamma\}$.
\end{defn}


\begin{figure}
\begin{displaymath}
\begin{array}{c}
\begin{prooftree}
(a:\phi \in \Gamma)
\justifies
\Gamma \cent a : \phi
\using\rulefont{V}
\end{prooftree}
\qquad
\begin{prooftree}
(\f{type}(C)=\phi)
\justifies
\Gamma \cent C:\phi
\using\rulefont{C}
\end{prooftree}
\qquad
\begin{prooftree}
\Gamma,a{:}\phi\cent \rtm:\psi\quad (a\in\atomsup) 
\justifies
\Gamma\cent (\lam{a{:}\phi}\rtm) : \phi\fto\psi 
\using\rulefont{L}
\end{prooftree}
\\[5ex]
\begin{prooftree}
\Gamma \cent \rtm : \phi\fto\psi 
\quad 
\Gamma \cent \stm : \phi 
\justifies
\Gamma \cent \rtm\stm : \psi
\using\rulefont{A}
\end{prooftree}
\qquad
\begin{prooftree}
\Gamma \cent s_i : \psi_i 
\quad (X{:}\phi{\in}\Gamma,\ b_i{:}\psi_i{\in}\Gamma,\ 1{\leq}i{\leq}n)
\justifies
\Gamma \cent X\,[b_i {\ssm} s_i]_1^n : \phi
\using\rulefont{Meta}
\end{prooftree}
\end{array}
\end{displaymath}
\caption{Typing rules for the simply-typed $\lambda$-calculus with holes}
\label{fig.hole.typing.rules}
\end{figure}

\begin{defn}
\label{defn.hole.typing.rules}
Define a \deffont{typing relation} by the rules in Figure~\ref{fig.hole.typing.rules}.
\end{defn}

One interesting feature of Figure~\ref{fig.hole.typing.rules} is that $b_i$ \emph{must} be typed in $\Gamma$ in \rulefont{Meta}.
This means that we can strengthen only for atoms in $\atomsup$ (the `abstractable' atoms); see Lemma~\ref{lemm.holes.strong}.
See Remarks~\ref{rmrk.why.down} and~\ref{rmrk.discussion.why} for discussions of why.
Also, the $s_i$ are typed in a context in which the $b_i$ occur.
There is no problem with circularities; in the models the $b_i$ are just elements (with special properties).
 
\begin{lemm}[Weakening]
\label{lemm.holes.weaken}
If $\Gamma\cent r:\psi$ and $c\not\in\dom(\Gamma)$ then $\Gamma,c{:}\chi\cent r:\psi$.
\end{lemm}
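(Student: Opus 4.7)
The plan is to proceed by straightforward structural induction on the derivation of $\Gamma \cent r:\psi$, treating each of the five rules \rulefont{V}, \rulefont{C}, \rulefont{L}, \rulefont{A}, and \rulefont{Meta} from Figure~\ref{fig.hole.typing.rules} in turn. The key observation underlying the induction is that adding $c{:}\chi$ to a context never invalidates any of the hypotheses attached to these rules, provided $c$ does not clash with existing bindings---which is precisely what the side-condition $c\not\in\dom(\Gamma)$ ensures.

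For \rulefont{V}, if $a{:}\phi\in\Gamma$ then $a{:}\phi\in\Gamma,c{:}\chi$, and \rulefont{V} re-applies. Rule \rulefont{C} is immediate since typing of constants is independent of the context. For \rulefont{A}, I just apply the inductive hypothesis to both premises and re-apply \rulefont{A}. For \rulefont{Meta}, the inductive hypothesis yields $\Gamma,c{:}\chi\cent s_i{:}\psi_i$ for each $i$; the side-conditions $X{:}\phi\in\Gamma$ and $b_i{:}\psi_i\in\Gamma$ remain valid in the weakened context, so \rulefont{Meta} re-applies.

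The only case requiring a moment's care is \rulefont{L}. Suppose $\Gamma\cent\lam{a{:}\phi}r:\phi\fto\psi$ comes from $\Gamma,a{:}\phi\cent r:\psi$ with $a\in\atomsup$. To apply the inductive hypothesis to $\Gamma,a{:}\phi$ with the fresh binding $c{:}\chi$, I need $c\not\in\dom(\Gamma,a{:}\phi)$, i.e.\ $c\neq a$. Since terms are taken up to $\alpha$-equivalence and $a\in\atomsup$ can always be renamed to a fresh atom in $\atomsup$, I may assume without loss of generality that $a\neq c$ (equivalently, appeal to equivariance, Theorem~\ref{thrm.equivar}, to rename $a$ away from $c$ before applying the IH). The IH then yields $\Gamma,c{:}\chi,a{:}\phi\cent r:\psi$, and \rulefont{L} delivers $\Gamma,c{:}\chi\cent\lam{a{:}\phi}r:\phi\fto\psi$.

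There is no substantive obstacle here; the only mild subtlety is the standard $\alpha$-renaming step in the \rulefont{L} case. Nothing in the hole/unknown machinery creates new difficulties, because weakening adds only atomic bindings and \rulefont{Meta} depends on the context only through membership checks for $X$ and the $b_i$.
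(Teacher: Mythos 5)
Your proof is correct and follows essentially the same route as the paper: a structural induction on the typing derivation in which every case is routine except \rulefont{L}, where the bound atom is renamed via equivariance (Theorem~\ref{thrm.equivar}) to avoid clashing with $c$. The paper's own proof is just a one-line statement of exactly this strategy, so no further comment is needed.
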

\begin{proof}
By induction on the derivation of $\Gamma\cent r:\psi$.
For the case of \rulefont{L} we may rename using equivariance (Theorem~\ref{thrm.equivar}).
\end{proof} 

\begin{lemm}[Strengthening]
\label{lemm.holes.strong}
If $\Gamma,c{:}\chi\cent r:\psi$ and $c\in\atomsup\setminus\fa(r)$ then $\Gamma\cent r:\psi$.
\end{lemm}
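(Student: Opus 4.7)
The plan is to prove this by a routine induction on the derivation of $\Gamma,c{:}\chi\cent r:\psi$, where the hypothesis $c\in\atomsup\setminus\fa(r)$ is threaded through and preserved (or automatically implied) at every subderivation. The first four cases are straightforward. For \rulefont{V} with $r=a$, the condition $c\notin\fa(a)=\{a\}$ forces $a\neq c$, so $a{:}\psi\in\Gamma$ and the rule re-applies directly. \rulefont{C} is immediate since constant-typing does not depend on the context. For \rulefont{L} with $r=\lam{a{:}\phi}r'$ the bound $a$ can be chosen fresh from $\dom(\Gamma,c{:}\chi)$ (using equivariance as in Lemma~\ref{lemm.holes.weaken} to rename if necessary), so $a\neq c$ and $c\notin\fa(r')$; the IH then gives $\Gamma,a{:}\phi\cent r':\psi'$ and \rulefont{L} re-applies. \rulefont{A} follows by invoking the IH on both premises, since $\fa(rs)=\fa(r)\cup\fa(s)$ passes the freshness hypothesis to both subterms.

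The only delicate case is \rulefont{Meta}, and it is precisely here that the side condition $c\in\atomsup$ is used. Suppose $r=X[b_i{\ssm}s_i]_1^n$ with premises $\Gamma,c{:}\chi\cent s_i:\psi_i$ together with $X{:}\phi\in\Gamma,c{:}\chi$ and $b_i{:}\psi_i\in\Gamma,c{:}\chi$. Because $X$ is an unknown and $c$ is an atom, $X{:}\phi\in\Gamma$. The well-formedness convention of Definition~\ref{defn.holes.sorts} guarantees $b_i\in\atomsdown$, and our hypothesis puts $c\in\atomsup$, so $c\neq b_i$ for every $i$; therefore $b_i{:}\psi_i\in\Gamma$. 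The free atoms of $r$ are $(\atomsdown\setminus\{b_i\})\cup\bigcup_i\fa(s_i)$, and since $c\in\atomsup$ the hypothesis $c\notin\fa(r)$ reduces to $c\notin\fa(s_i)$ for every $i$. Applying the IH to each premise gives $\Gamma\cent s_i:\psi_i$, and reapplying \rulefont{Meta} yields $\Gamma\cent r:\phi$.

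I expect the \rulefont{Meta} case to be the only conceptually interesting step, and the only place where the restriction $c\in\atomsup$ is essential: if $c$ could lie in $\atomsdown$, then $c$ might appear as one of the moderating atoms $b_i$ inside a moderated unknown, in which case its typing could be needed for \rulefont{Meta} even though $c\notin\fa(r)$. Isolating this observation is really all that is required; the remaining cases are purely structural and go through by inspection.
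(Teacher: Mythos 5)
Your proof is correct and follows essentially the same route as the paper: induction on the typing derivation, renaming the bound atom via equivariance in the \rulefont{L} case, and identifying \rulefont{Meta} as the one case where $c\in\atomsup$ is essential because the moderating atoms $b_i\in\atomsdown$ must be typed in $\Gamma$ even when absent from $\fa(r)$. Your write-up simply spells out the routine cases that the paper leaves implicit.
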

\begin{proof}
By induction on the derivation of $\Gamma\cent r:\psi$.
For the case of \rulefont{L} we may rename using equivariance (Theorem~\ref{thrm.equivar}).
The rule \rulefont{Meta} is why we insist on $c{\not\in}\atomsdown$; the atoms $\{b_i\mid 1{\leq}i{\leq}n\}\subseteq\atomsdown$ may not feature in $\fa(r)$ but \emph{must} be in $\Gamma$, as discussed in Remark~\ref{rmrk.why.down}.
\end{proof} 

\begin{rmrk}
\label{rmrk.why.down}
\rulefont{Meta} states that if $X{:}\phi\in\Gamma$ and $b_i{:}\psi_i\in\Gamma$ for $1{\leq}i{\leq}n$ then $\Gamma\cent s_i:\psi_i$ for $1{\leq}i{\leq}n$ implies $\Gamma\cent X[b_i{\ssm}s_i]:\phi$.
We must insist on $b_i{:}\psi_i\in\Gamma$, for suppose otherwise: 
Then for $a\in\atomsup$ and $b\in\atomsdown$ we could derive $X{:}\phi,a{:}\chi\cent \lam{b{:}\phi}X[a\ssm b]:\phi\fto\phi$; the types of $a$ and $b$ are inconsistent.
\end{rmrk}

\begin{lemm}
\label{lemm.level.one.subst.reduces.free.atoms} 
\label{lemm.holes.fa.sub}
$\f{fa}(\rtm[a \ssm \ttm]) \subseteq (\f{fa}(\rtm) \setminus \{ a \}) \cup \f{fa}(\ttm)$
\end{lemm}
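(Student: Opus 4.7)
The plan is to proceed by induction on the structure of the term $r$, following the cases of Definition~\ref{defn.holes.sub}. The atom, constant, and application cases are routine; I describe them briefly and concentrate on the two interesting cases: $\lambda$-abstraction, where the side-condition does the work, and the moderated unknown, which is where the subtleties of Definition~\ref{defn.holes.sub} actually bite.

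For $r=b$ an atom: if $b=a$ then $r[a\ssm t]=t$ and the inclusion is trivial; otherwise $r[a\ssm t]=b$ and $\fa(r)\setminus\{a\}=\{b\}$. The constant case and the application case (by direct use of the inductive hypotheses on each subterm and the definition of $\fa$ on applications) are immediate. For $r=\lam{c{:}\phi}r'$, by equivariance (Theorem~\ref{thrm.equivar}) we may assume the side-condition $c\in\atomsup\setminus(\{a\}\cup\fa(t))$, so that $r[a\ssm t]=\lam{c{:}\phi}(r'[a\ssm t])$. The inductive hypothesis gives $\fa(r'[a\ssm t])\subseteq(\fa(r')\setminus\{a\})\cup\fa(t)$; removing $c$ from both sides and using $c\notin\fa(t)$ and $c\neq a$ yields the required inclusion, since set-difference by $\{a\}$ and $\{c\}$ commutes.

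The main case, and the place where a bit of care is needed, is $r=X[b_i\ssm s_i]_{i=1}^n$, where all $b_i\in\atomsdown$. Here one must unfold the composition clause of Definition~\ref{defn.holes.sub} with outer substitution $[a\ssm t]$ (so $B=\{1\}$, $b_1=a$, $s_1=t$) and split into three subcases according to whether $a$ already lies in $\{b_1,\dots,b_n\}$ or not, and (if not) whether $a\in\atomsdown$ or $a\in\atomsup$. In the first subcase the domain of the resulting moderator is unchanged and each $s_i$ is replaced by $s_i[a\ssm t]$. In the second subcase a fresh binding $a\ssm t$ is added and the $s_i$ are substituted into. In the third subcase the outer binding is garbage-collected (because $a\in\atomsup$) and only the $s_i$ are substituted into. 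In every subcase applying the inductive hypothesis term-wise to the $s_i$, together with the identity $\fa(X[b_i\ssm s_i]_1^n)=(\atomsdown\setminus\{b_i\})\cup\bigcup_i\fa(s_i)$, gives the bound. The small piece of arithmetic that closes the argument is that $\atomsdown\setminus\{b_i\}\setminus\{a\}=\atomsdown\setminus\{b_i\}$ whenever $a\in\{b_i\}\cup\atomsup$, which accounts neatly for the two subcases in which no fresh atom enters the top-level $\atomsdown$-complement.

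The only real obstacle is bookkeeping around the garbage-collection clause of Definition~\ref{defn.holes.sub}: one must check that the atom-set $\atomsdown\setminus\dom(\tau)$ of the composite moderator $\tau$ is exactly what is needed to land inside $(\fa(r)\setminus\{a\})\cup\fa(t)$, and this depends on the $\atomsdown$/$\atomsup$ status of $a$. Once the three subcases are written out this is mechanical, but it is the step that prevents the lemma from being a one-line induction.
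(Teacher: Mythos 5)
Your proof is correct, and it matches the intended argument: the paper states this lemma without proof, treating it as routine, and your structural induction---with the $\alpha$-renaming of the bound atom in the $\lambda$-case and the three-way split on whether $a$ lies in $\{b_1,\dots,b_n\}$, in $\atomsdown$, or in $\atomsup$ for the moderated-unknown case---is exactly the verification being implicitly appealed to. Your closing observation that $(\atomsdown\setminus\{b_i\mid i\})\setminus\{a\}=\atomsdown\setminus\{b_i\mid i\}$ whenever $a\in\{b_i\mid i\}\cup\atomsup$ is indeed the one point where the garbage-collection clause of Definition~\ref{defn.holes.sub} has to be reconciled with the definition of $\fa$ on moderated unknowns, and you handle it correctly.
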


\begin{lemm}
\label{lemm.hole.sub.single}
$\Gamma,(c_i{:}\chi_i)_1^n\cent r:\phi$ and $\Gamma,(c_i{:}\chi_i)_1^n\cent \stm_j:\chi_j$ for $1{\leq}j{\leq}n$ imply $\Gamma,(c_i{:}\chi_i)_1^n\cent \rtm[c_i\ssm\stm_i]_1^n:\phi$.

As a corollary, if $c\in\atomsup\setminus\fa(s)$ then $\Gamma,c{:}\chi\cent r:\phi$ and $\Gamma\cent s:\chi$ imply $\Gamma\cent r[c\ssm s]:\phi$. 
\end{lemm}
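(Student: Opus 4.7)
The plan is to prove the main statement by induction on the derivation of $\Gamma,(c_i{:}\chi_i)_1^n\cent r:\phi$, handling the simultaneous $n$-fold substitution directly (rather than iterating a single substitution, which complicates the \rulefont{Meta} case). The cases \rulefont{V}, \rulefont{C}, and \rulefont{A} are immediate: for \rulefont{V}, either $r=c_j$ for some $j$, in which case $r[c_i{\ssm}s_i]_1^n=s_j$ and $\phi=\chi_j$ so the typing of $s_j$ is exactly the goal, or $r=a$ with $a{:}\phi\in\Gamma$ distinct from every $c_i$, in which case $r[c_i{\ssm}s_i]_1^n=a$ and \rulefont{V} applies directly; constants are unchanged, and application is routine from the inductive hypothesis.

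For \rulefont{L} with $r=\lam{a{:}\phi'}r'$, I would use equivariance (Theorem~\ref{thrm.equivar}) to rename $a\in\atomsup$ so that $a\notin\{c_i\mid i\}\cup\bigcup_i\fa(s_i)\cup\dom(\Gamma)$; this is precisely the side condition that makes $(\lam{a{:}\phi'}r')[c_i{\ssm}s_i]_1^n=\lam{a{:}\phi'}(r'[c_i{\ssm}s_i]_1^n)$. Weakening (Lemma~\ref{lemm.holes.weaken}) lifts each $\Gamma,(c_i{:}\chi_i)_1^n\cent s_j:\chi_j$ to include the new variable $a{:}\phi'$, the inductive hypothesis applies to $r'$, and a single use of \rulefont{L} finishes the case.

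The main obstacle is the \rulefont{Meta} case $r=X[b_k{\ssm}r_k]_1^m$. Here the substitution clause of Definition~\ref{defn.holes.sub} rewrites $r[c_i{\ssm}s_i]_1^n$ to a new moderated unknown $X[\sigma]$ whose binding set $\sigma$ consists of $(b_k{\ssm}r_k[c_i{\ssm}s_i]_1^n)$ for each $k\in\{1,\dots,m\}$ together with $(c_j{\ssm}s_j)$ for those $j$ with $c_j\in\atomsdown$ and $c_j\notin\{b_k\mid k\}$ (the other $c_j$ are garbage-collected, which is sound because a $c_j\in\atomsup$ cannot appear under $X$ after instantiation, and a $c_j$ equal to some $b_k$ has already been eliminated by the inner binding $b_k{\ssm}r_k$). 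To apply \rulefont{Meta} to $X[\sigma]$ I must check three things: $X{:}\phi\in\Gamma,(c_i{:}\chi_i)_1^n$ (inherited from the original derivation); every left-hand atom of $\sigma$ is typed in $\Gamma,(c_i{:}\chi_i)_1^n$ (the $b_k$ from the original \rulefont{Meta} premise, the retained $c_j$ from the context itself); and each right-hand term has the matching type---the $r_k[c_i{\ssm}s_i]_1^n$ have type $\psi_k$ by the inductive hypothesis applied to the subderivations of $r_k$, and the retained $s_j$ have type $\chi_j$ by assumption.

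For the corollary, taking $n=1$ and weakening $\Gamma\cent s:\chi$ to $\Gamma,c{:}\chi\cent s:\chi$ via Lemma~\ref{lemm.holes.weaken}, the main lemma yields $\Gamma,c{:}\chi\cent r[c{\ssm}s]:\phi$. Then Lemma~\ref{lemm.holes.fa.sub} gives $\fa(r[c{\ssm}s])\subseteq(\fa(r)\setminus\{c\})\cup\fa(s)$, and since $c\notin\fa(s)$ by hypothesis we conclude $c\notin\fa(r[c{\ssm}s])$; as $c\in\atomsup$, strengthening (Lemma~\ref{lemm.holes.strong}) removes $c{:}\chi$ from the context to deliver $\Gamma\cent r[c{\ssm}s]:\phi$.
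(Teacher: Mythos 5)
Your proposal is correct and follows essentially the same route as the paper: induction on the typing derivation with an equivariant renaming of the bound atom in the \rulefont{L} case, and the corollary obtained from weakening, strengthening, and Lemma~\ref{lemm.holes.fa.sub} exactly as the paper does. The paper dismisses the \rulefont{Meta} case as part of a ``routine'' induction; your explicit treatment of the garbage-collection of the $c_j\in\atomsup$ entries and the retyping of the retained and composed bindings is a faithful elaboration of that case rather than a different argument.
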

\begin{proof}
By a routine induction on the derivation of $\Gamma,(c_i{:}\chi_i)_1^n\cent r:\phi$.
For the case of \rulefont{L} we may rename the bound atom in the derivation using equivariance (Theorem~\ref{thrm.equivar}).

The corollary follows by Lemmas~\ref{lemm.holes.weaken}, \ref{lemm.holes.strong}, and~\ref{lemm.holes.fa.sub}.
\end{proof}

\begin{rmrk}
\label{rmrk.discussion.why}
Lemma~\ref{lemm.hole.sub.single} does not state $\Gamma,c{:}\chi\cent \rtm:\phi$ and 
$\Gamma\cent \stm:\chi$ 
imply 
$\Gamma\cent \rtm[c\ssm\stm]:\phi$, for $c\in\atomsdown$.
For instance $X{:}\phi,c{:}\phi\cent X:\phi$ and $X{:}\phi\cent X:\phi$ but it is not the case that $X{:}\phi\cent X[c\ssm X]:\phi$.
\end{rmrk}

\section{Level 2 substitution}
\label{sect.level.2.substitution}

\begin{defn}
\label{defn.st.sub.2}
A \deffont{level 2 substitution} is a map $\theta$ from unknowns to terms.\footnote{The reader familiar with nominal techniques might expect a condition that $\fa(\theta(X))\subseteq\atomsdown$ always.
This would be necessary if moderations were permutations, but is not if they are substitutions.
See \cite[Proposition~3.4.3]{gabbay:nomtnl}.}
Write $[X\ssm t]$ for the substitution mapping $X$ to $t$ and all other $Y$ to $Y$.
$$
\begin{array}{r@{\ }l@{\qquad}r@{\ }l@{\qquad}r@{\ }l}
a\theta          = & a             
& 
(\lam{a{:}\phi}\stm)\theta = & \lam{a{:}\phi}(\stm\theta) \quad (a\in\atomsup{\setminus}\bigcup_{X\in\fv(t)}\fa(\theta(X)))
&
C\theta =&C
\\
(\rtm\stm)\theta = & (\rtm\theta)(\stm\theta) 
& 
X[b_i{\ssm}s_i]_1^n\theta = & \theta(X)[b_i{\ssm}s_i\theta]_1^n
\end{array}
$$
\end{defn}

\begin{prop}
\label{prop.st.sub2}
If $\Gamma,X{:}\xi\cent \rtm{:}\phi$
and
$\Gamma\cent \ttm{:}\xi$
then
$\Gamma\cent \rtm[X\ssm\ttm]{:}\phi$.
\end{prop}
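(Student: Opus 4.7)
The plan is to prove Proposition~\ref{prop.st.sub2} by induction on the derivation of $\Gamma, X{:}\xi \cent \rtm : \phi$, case-splitting on the last typing rule used. The rules \rulefont{V}, \rulefont{C}, and \rulefont{A} are routine: atoms and constants are unaffected by $[X \ssm t]$ and retain their typings in $\Gamma$ (since removing $X{:}\xi$ does not touch any atomic typing), while for application we apply the inductive hypothesis to both subterms.

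For \rulefont{L}, where $r = \lam{a{:}\phi'}s$ with $a \in \atomsup$ and $\Gamma, X{:}\xi, a{:}\phi' \cent s : \psi$, the side-condition in the clause for abstraction in Definition~\ref{defn.st.sub.2} requires $a \notin \fa(t)$. I would first invoke equivariance (Theorem~\ref{thrm.equivar}) to $\alpha$-rename $a$ so that in addition $a \notin \dom(\Gamma)$, apply the inductive hypothesis to $s$ to obtain $\Gamma, a{:}\phi' \cent s[X\ssm t] : \psi$, and then re-apply \rulefont{L}.

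The substantive case is \rulefont{Meta}, where $r = X'[b_i \ssm s_i]_1^n$ with $X'{:}\phi \in \Gamma, X{:}\xi$ and $b_i{:}\psi_i \in \Gamma$ and $\Gamma, X{:}\xi \cent s_i : \psi_i$. Here I split on whether $X' = X$:
\begin{itemize*}
\item If $X' \neq X$, then $r[X\ssm t] = X'[b_i \ssm s_i[X\ssm t]]_1^n$ and $X'{:}\phi \in \Gamma$; the inductive hypothesis on each $s_i$ closes the case via \rulefont{Meta}.
\item If $X' = X$, then $\phi = \xi$ and $r[X\ssm t] = t[b_i \ssm s_i[X\ssm t]]_1^n$. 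By hypothesis $\Gamma \cent t : \xi = \phi$, and the inductive hypothesis gives $\Gamma \cent s_i[X\ssm t] : \psi_i$. Crucially, since \rulefont{Meta} forces $b_i{:}\psi_i \in \Gamma$ already (cf. Remark~\ref{rmrk.why.down}), we have $\Gamma = \Gamma, (b_i{:}\psi_i)_1^n$, so Lemma~\ref{lemm.hole.sub.single} applies and yields $\Gamma \cent t[b_i \ssm s_i[X\ssm t]]_1^n : \phi$.
\end{itemize*}

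The main obstacle, as usual for substitution lemmas in a two-level framework, is the \rulefont{Meta} case when the unknown being substituted matches: we must type an instance of $t$ under a level 1 substitution built from (inductively typed) level 1 substitutes. The proof goes through precisely because the \rulefont{Meta} rule in Figure~\ref{fig.hole.typing.rules} requires the moderating atoms $b_i$ to be typed in $\Gamma$, so Lemma~\ref{lemm.hole.sub.single} can be invoked without strengthening; this is exactly why Definition~\ref{defn.hole.typing.rules} was set up with the apparently redundant requirement that $b_i{:}\psi_i \in \Gamma$.
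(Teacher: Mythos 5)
Your proposal is correct and follows essentially the same route as the paper's proof: induction on the typing derivation, equivariant renaming of the bound atom in the \rulefont{L} case, and in the \rulefont{Meta} case for $X$ itself an appeal to Lemma~\ref{lemm.hole.sub.single}, which applies precisely because \rulefont{Meta} forces $b_i{:}\psi_i\in\Gamma$. Your explicit split on $X'=X$ versus $X'\neq X$ is only slightly more detailed than the paper, which treats the non-matching case as routine.
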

\begin{proof}
By induction on the derivation of $\Gamma,X{:}\xi\cent\rtm{:}\phi$.  The cases of \rulefont{V} and \rulefont{A} are routine: 
\begin{itemize*}
\item
\emph{The case of \rulefont{Meta} for $X$.}\quad
Suppose $\Gamma,X{:}\xi,(b_i{:}\psi_i)_1^n\cent X[b_i{\ssm} s_i]_1^n:\xi$ because $\Gamma,X{:}\xi,(b_i{:}\psi_i)_1^n\cent s_j:\psi_j$ for $1{\leq}j{\leq}n$.

Suppose $\Gamma,(b_i{:}\psi_i)_1^n\cent t:\xi$.
By inductive hypothesis $\Gamma,(b_i{:}\psi_i)_1^n\cent s_j[X{\ssm}\ttm]:\psi_j$ for $1{\leq}j{\leq}n$.
By definition $X[b_i{\ssm}s_i]_1^n[X{\ssm}\ttm]=\ttm[b_i{\ssm} s_i[X{\ssm}\ttm]]_1^n$.
So it suffices to show that 
$\Gamma,(b_i{:}\psi_i)_1^n\cent t[b_i{\ssm} s_i[X{\ssm}t]]_1^n:\xi$.
We use Lemma~\ref{lemm.hole.sub.single}.
\item
\emph{The case of \rulefont{L}.}\quad
Renaming if necessary using equivariance (Theorem~\ref{thrm.equivar}), assume $a\in\atomsup\setminus\fa(t)$.
By definition $(\lam{a{:}\phi}\rtm)[X{\ssm} \ttm]=\lam{a{:}\phi}(\rtm[X{\ssm}\ttm])$.
We use the inductive hypothesis for $\Gamma,X{:}\xi,a{:}\phi\cent \rtm:\psi$. 
%
%
%
%
%
\qedhere \end{itemize*}
\end{proof}

\subsection{Models}

\begin{defn}
\label{defn.holes.lambda.model}
A \deffont{model} $\interp I$ consists of an assignment for each type environment $\Gamma$ and type $\phi$ of a finitely-supported set $\idenot{\Gamma}{\phi}$ together with the same data as in Definition~\ref{defn.lambda.model}, satisfying the same equivariance conditions and \rulefont{Suba} to \rulefont{Sub{\text{$\lambda$}}} except that in addition:
\begin{enumerate*}
\setcounter{enumi}{7}
\item
If $\Gamma(a)=\Gamma'(a)$ for every $a\in\mathbb A$ then $\idenot{\Gamma}{\phi}=\idenot{\Gamma'}{\phi}$ (so the model ignores $X{:}\phi\in\Gamma$ and only looks at the typing of atoms).
\end{enumerate*}
\end{defn}

\begin{defn}[\bf Simultaneous substitution]
\label{defn.sim.sub}
Suppose $b_i{:}\psi_i\in\Gamma$ and $y_i\in\idenot{\Gamma}{\psi_i}$ for $1{\leq}i{\leq}n$.
Suppose $x\in\idenot{\Gamma}{\phi}$.
Specify $x[b_i{\sm}y_i]_1^n$ to be equal to 
$\bigl(((c_1\ b_1)\circ\dots\circ(c_n\ b_n))\act x\bigr)[c_1{\sm}y_1]\dots[c_n{\sm}y_n]$
for fresh $c_1,\dots,c_n$ (so $c_i\not\in\supp(x)\cup\bigcup_i\supp(y_i)$ for $1\leq i\leq n$).\footnote{Definition~\ref{defn.holes.lambda.model} only provides substitution for one atom at a time.  
We need simultaneous substitution in the semantics to give meaning to level 2 variables (see Definition~\ref{defn.hole.interp}).
The minor difficulty is that it might be that $b_i\in\supp(y_j)$.  
So we `rename atoms fresh' first, and then substitute for these atoms one at a time.
Certain detailed but routine verifications are necessary to make sure this works and is well-defined (depends neither on the fresh choice of $c_i$, nor on the order in which the substitutions are then carried out).
The relevant maths is described in \cite[Section~6]{gabbay:fountl}.}
\end{defn}

\begin{lemm}
\label{lemm.intype.equivar}
If $x\in\idenot{\Gamma}{\phi}$ then $\pi\act x\in\idenot{\pi\act\Gamma}{\phi}$.
\end{lemm}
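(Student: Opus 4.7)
The plan is to derive this immediately from the equivariance axiom imposed on models in Definition~\ref{defn.lambda.model}, which Definition~\ref{defn.holes.lambda.model} inherits unchanged. Recall that this axiom explicitly states
$$\pi\act\idenot{\Gamma}{\phi}=\{\pi\act y\mid y\in\idenot{\Gamma}{\phi}\}=\idenot{\pi\act\Gamma}{\phi}.$$

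First I would suppose $x\in\idenot{\Gamma}{\phi}$. By the first equality above, which simply unfolds the pointwise action on a set from Definition~\ref{defn.perm}, we get $\pi\act x\in\pi\act\idenot{\Gamma}{\phi}$. Then by the second equality, $\pi\act\idenot{\Gamma}{\phi}=\idenot{\pi\act\Gamma}{\phi}$, so $\pi\act x\in\idenot{\pi\act\Gamma}{\phi}$, which is the desired conclusion.

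There is essentially no obstacle here; the statement is a direct unpacking of one conjunct of the equivariance requirement. The only thing worth being careful about is to keep clear that $\pi\act\idenot{\Gamma}{\phi}$ on the left denotes the elementwise image of the underlying set under the permutation action, while $\idenot{\pi\act\Gamma}{\phi}$ on the right denotes the denotation at the permuted type environment (with $\pi\act\Gamma$ given by Notation~\ref{nttn.pi.Gamma}); equivariance asserts these coincide, which is exactly what the lemma records in its one-directional form.
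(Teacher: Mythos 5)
Your proof is correct. The only difference from the paper is the justification cited: the paper's proof is the one-liner ``Direct from equivariance (Theorem~\ref{thrm.equivar})'', i.e.\ it appeals to the general ZFA equivariance meta-theorem, whereas you appeal to the explicit equivariance clause $\pi\act\idenot{\Gamma}{\phi}=\{\pi\act y\mid y\in\idenot{\Gamma}{\phi}\}=\idenot{\pi\act\Gamma}{\phi}$ imposed on models in Definition~\ref{defn.lambda.model} and inherited by Definition~\ref{defn.holes.lambda.model}. These amount to the same thing, but your route is if anything the more self-contained one: to invoke Theorem~\ref{thrm.equivar} directly one must note that the predicate $x\in\idenot{\Gamma}{\phi}$ has the model $\interp I$ as a hidden parameter and that $\interp I$ is itself permutation-invariant as a structure, which is precisely what the explicit equivariance conditions on models guarantee. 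Unpacking that clause pointwise, as you do, records the lemma without any detour through the meta-theorem, and your remark distinguishing the elementwise image $\pi\act\idenot{\Gamma}{\phi}$ from the denotation at the permuted environment $\idenot{\pi\act\Gamma}{\phi}$ is exactly the point at which the clause does its work.
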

\begin{proof}
Direct from equivariance (Theorem~\ref{thrm.equivar}).
\end{proof}

By Lemma~\ref{lemm.intype.equivar} syntax is equivariant for atoms in $\atomsup$, because the predicate we use to define it in Definition~\ref{defn.holes.sorts} uses a partition $\mathbb A=\atomsdown\cup\atomsup$.  
The notion of model of Definition~\ref{defn.holes.lambda.model} does not use this partition however, so it is equivariant for \emph{all} $\pi$, and not just those with $\nontriv(\pi)\subseteq\atomsup$.  
Lemma~\ref{lemm.intype.equivar} depends on this, and Lemma~\ref{lemm.holes.easy} uses it.

\begin{lemm}
\label{lemm.holes.easy}
Suppose $b_i{:}\psi\in\Gamma$ and $y_i\in\idenot{\Gamma}{\psi}$ for $1{\leq}i{\leq}n$, and suppose $z\in\idenot{\Gamma}{\chi}$.
Then $z[b_i{\sm} y_i]_1^n\in\idenot{\Gamma}{\chi}$.
\end{lemm}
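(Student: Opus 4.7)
The plan is to push $z$ into a slightly enlarged context, rename the $b_i$ to fresh atoms via a permutation, and then peel off the substitutions one at a time using clauses~3 and~4 of Definition~\ref{defn.lambda.model}.

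First I would pick the fresh auxiliary atoms $c_1,\ldots,c_n$ required by Definition~\ref{defn.sim.sub} so that in addition they lie outside $\dom(\Gamma)$ (only finitely many atoms are ruled out). Set $\Gamma_{*}=\Gamma\cup\{c_i{:}\psi\mid 1{\leq}i{\leq}n\}$, which is a well-formed functional type environment. Clause~5 of Definition~\ref{defn.lambda.model}, instantiated with $\Gamma\subseteq\Gamma_{*}$ so that $\Gamma\cap\Gamma_{*}=\Gamma$, yields \emph{weakening}: $\idenot{\Gamma}{\phi}\subseteq\idenot{\Gamma_{*}}{\phi}$. Hence $z\in\idenot{\Gamma_{*}}{\chi}$ and each $y_i\in\idenot{\Gamma_{*}}{\psi}$. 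The permutation $\pi=(c_1\ b_1)\fcomp\dots\fcomp(c_n\ b_n)$ fixes $\Gamma_{*}$ as a set, since it swaps the pairs $b_i{:}\psi$ and $c_i{:}\psi$, all of which lie in $\Gamma_{*}$, and leaves the other entries alone. So Lemma~\ref{lemm.intype.equivar} gives $\pi\act z\in\idenot{\Gamma_{*}}{\chi}$.

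It then remains to induct on $k$ to establish that $(\pi\act z)[c_1{\sm}y_1]\cdots[c_k{\sm}y_k]\in\idenot{\Gamma\cup\{c_{k+1}{:}\psi,\ldots,c_n{:}\psi\}}{\chi}$. The step from $k$ to $k{+}1$ applies clause~3 of Definition~\ref{defn.lambda.model} to form $[c_{k+1}{:}\psi](\cdots)\in\idenot{\Gamma\cup\{c_{k+2}{:}\psi,\ldots,c_n{:}\psi\}}{\psi\fto\chi}$, then clause~4 to apply $\bullet y_{k+1}$, where $y_{k+1}$ lies in the required denotation by the weakening above. Setting $k=n$ yields $z[b_i{\sm}y_i]_1^n\in\idenot{\Gamma}{\chi}$, as desired.

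The only genuine subtlety is the bookkeeping with type environments; in particular, arranging that $\pi$ genuinely fixes $\Gamma_{*}$ as a set is precisely why we first weaken into the enlarged context rather than attempt to rename inside $\Gamma$ directly. Beyond that, no novel ingredients are required beyond the structural clauses of Definition~\ref{defn.lambda.model} and the equivariance supplied by Lemma~\ref{lemm.intype.equivar}.
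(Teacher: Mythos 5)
Your proof is correct and follows exactly the route the paper sketches in its one-line proof: unpack Definition~\ref{defn.sim.sub}, use the equivariance of Lemma~\ref{lemm.intype.equivar} to move $\pi\act z$ into a context fixed by the renaming, and then peel off the single substitutions with the abstraction and application clauses of the model. The only thing you add is the explicit context bookkeeping (weakening via clause~5 and the induction on $k$), which the paper leaves implicit; this is a faithful filling-in of detail rather than a different argument.
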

\begin{proof}
Unpack Definition~\ref{defn.sim.sub} and use Lemma~\ref{lemm.intype.equivar} and conditions~2 and~3 of Definition~\ref{defn.holes.lambda.model}.
\end{proof}

%
%
%

\begin{defn}
\label{defn.holes.valuation}
A \deffont{valuation} $\varsigma$ is a function on unknowns such that $\supp(\varsigma(X))\subseteq\atomsdown$ for every $X$.

Write $\Gamma\ment\varsigma$ when $X{:}\phi\in\Gamma$ implies $\varsigma(X)\in\idenot{\Gamma}{\phi}$ for every unknown $X$.
\end{defn}

\begin{rmrk}
Definition~\ref{defn.holes.valuation} seems harmless, but it carries some real meaning.
By condition~\ref{item.subset.condition} of Definitions~\ref{defn.lambda.model} and~\ref{defn.holes.lambda.model}, if $x\in\idenot{\Gamma}{\phi}$ then $\supp(x)\subseteq\dom(\Gamma)$.
This implies that if $X{:}\phi\in\Gamma$ then $X$ ranges over elements \emph{with support in $\dom(\Gamma)$}.
What happens to all the atoms in $\atomsdown\setminus\dom(\Gamma)$?
They cannot be used (unless we weaken the context with more typings).

This is related to a celebrated topic of continuing debate in the philosophy of language that assertions like `the King of France is bald' name and assert properties of apparently non-existent objects; they have meaning but do not denote \cite{russell:ond}.
In the same way, the variable $X$ asserts a property of all atoms in $\atomsdown$---that they may appear in the denotation of $X$---but this does not imply that these atoms exist in the possible world determined by the typing $\Gamma$.
The typing context determines which of the atoms in $\atomsdown$ have \emph{existential import} \cite{lambert:exiir}.
The extra twist to this story here, is that in nominal techniques atoms name themselves. 

This is another way of looking at the fine detail of the rule \rulefont{Meta}, that $b_i{:}\psi_i\in\Gamma$ even though $b_i\not\in\fa(X[b_i{\ssm}s_i]_1^n])$ in general.
In order to be substituted for, the atom $b_i$ must exist, and to exist it must be typed. 
\end{rmrk}

\begin{defn}
\label{defn.hole.interp}
Suppose $\Gamma\ment\varsigma$ and $\Gamma\cent r:\phi$. 
Define an \deffont{interpretation function} mapping $r$ to $\idenot{\varsigma;\Gamma}{r}$, by induction on $r$:
$$
\begin{array}{r@{\ }l@{\quad}l@{\qquad}r@{\ }l}
\idenot{\varsigma;\Gamma}{a}=&a^\iden_\phi &(a{:}\phi\in\Gamma)
&
\idenot{\varsigma;\Gamma}{C}=&C^\iden
\\
\idenot{\varsigma;\Gamma}{\lam{a{:}\phi}r}=&[a{:}\phi]\idenot{\varsigma;\Gamma}{r}  &(a\in\atomsup\setminus\dom(\Gamma))
&
\idenot{\varsigma}{rs}=&\idenot{\varsigma;\Gamma}{r}\bullet\idenot{\varsigma;\Gamma}{s}
\\
\idenot{\varsigma;\Gamma}{X\,[b_i\ssm s_i]_i}=&\varsigma(X)\,[b_i\sm\idenot{\varsigma;\Gamma}{s_i}]_i 
&(X{:}\phi\in\Gamma)
\end{array}
$$
\end{defn}

\noindent A few brief words on the case of $\lam{a{:}\phi}r$:
The condition $a\not\in\dom(\Gamma)$ prevents $a{:}\phi$ from overwriting typing information in $\Gamma$.
The condition $a\not\in\atomsdown$ ensures that the clause is well-defined, since otherwise $a$ might `accidentally capture' an atom in $\varsigma(X)$ for $X\in\fv(r)$.
The effect of $a$ capturing an atom in $X$ can be attained e.g. as $\lam{a{:}\phi}(X[a'{\ssm} a])$ where $a'\in\atomsdown$.

The language with holes (Definition~\ref{defn.holes.sorts}) is more expressive than the language without it (Definition~\ref{defn.pts.sorts}).
For instance, $a{:}\phi\ment r[a{\sm}a]=r$ (Definition~\ref{defn.ment}) is true for $r$ without unknowns, but otherwise may be false.
This is because without unknowns, we can use \rulefont{Suba} to \rulefont{Sub{\text{$\lambda$}}} to push substitution down to the atoms until it either vanishes or substitutes.
With unknowns this cannot be done; we may get `stuck' on a moderated unknown.

Put another way, $X$ really does range over arbitrary elements of the model whereas $a$ can only be \emph{substituted} for an arbitrary element of the model---and these are two distinct concepts.

\begin{xmpl}
\label{xmpl.hole.power}
Consider one base type and no constants and a nominal model $\interp I$ such that $\idenot{a{:}\tau}{\tau}=\{a^\iden_\tau,0,1\}$, where $\supp(0)=\supp(1)=\varnothing$.
Set $a^\iden_\tau[a{\sm} x]=x$, $0[a{\sm} x]=0$, and $1[a{\sm} x]=0$.

In STLC we cannot detect the element $1$ and its sensitivity to $[a{\sm} x]$ even though $a\not\in\supp(1)$.
In STLC extended with unknowns, we can.
Thus, we use \rulefont{Sub\#} instead of a weaker axiom that $b[a{\sm}x]=x$.
\end{xmpl}

\begin{thrm}[\bf First soundness theorem]
\label{thrm.hole.type.soundness}
If $\Gamma\ment\varsigma$ and $\Gamma\cent r:\phi$ then $\idenot{\varsigma;\Gamma}{r}\in\idenot{\Gamma}{\phi}$.
\end{thrm}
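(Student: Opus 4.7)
The plan is to replay the induction used for Theorem~\ref{thrm.type.soundness} (the corresponding first soundness theorem without holes) on the derivation of $\Gamma\cent r:\phi$, with the new case \rulefont{Meta} as the only genuine novelty. Since Definition~\ref{defn.holes.lambda.model} retains all of the data and equations of Definition~\ref{defn.lambda.model}, the clauses for atoms, constants, abstraction, and application go through verbatim: \rulefont{V} and \rulefont{C} are immediate from the fact that $a^\iden_\phi\in\idenot{\Gamma}{\phi}$ and $C^\iden\in\idenot{\Gamma}{\f{type}(C)}$ by Definition~\ref{defn.holes.lambda.model}; \rulefont{A} uses the $\bullet$ clause; \rulefont{L} uses the $[a{:}\phi](-)$ clause, modulo choosing $a\in\atomsup\setminus\dom(\Gamma)$ (which is always possible by equivariance, Theorem~\ref{thrm.equivar}) so that the side-condition of Definition~\ref{defn.hole.interp} is met.

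The interesting case is \rulefont{Meta}. Suppose $\Gamma\cent X[b_i{\ssm}s_i]_1^n:\phi$ because $X{:}\phi\in\Gamma$, each $b_i{:}\psi_i\in\Gamma$, and $\Gamma\cent s_i:\psi_i$ for $1\le i\le n$. Unfolding the interpretation gives
$\idenot{\varsigma;\Gamma}{X[b_i{\ssm}s_i]_i}=\varsigma(X)\,[b_i{\sm}\idenot{\varsigma;\Gamma}{s_i}]_i$.
By the inductive hypothesis applied to each premise, $\idenot{\varsigma;\Gamma}{s_i}\in\idenot{\Gamma}{\psi_i}$. From $\Gamma\ment\varsigma$ and $X{:}\phi\in\Gamma$ we obtain $\varsigma(X)\in\idenot{\Gamma}{\phi}$. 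The atoms $b_i$ are typed in $\Gamma$ by the \rulefont{Meta} premise, so the hypotheses of Lemma~\ref{lemm.holes.easy} are met, and that lemma delivers $\varsigma(X)[b_i{\sm}\idenot{\varsigma;\Gamma}{s_i}]_1^n\in\idenot{\Gamma}{\phi}$, as required.

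I expect the main obstacle to be exactly this \rulefont{Meta} step: it is where the fine print of Figure~\ref{fig.hole.typing.rules} (why $b_i{:}\psi_i$ \emph{must} be in $\Gamma$, cf.\ Remark~\ref{rmrk.why.down}) really earns its keep, because only then can simultaneous substitution by the $\idenot{\varsigma;\Gamma}{s_i}$ be shown to stay inside $\idenot{\Gamma}{\phi}$ via Lemma~\ref{lemm.holes.easy}. A minor bookkeeping hazard, not a substantive difficulty, is the \rulefont{L} case, where we must silently $\alpha$-rename the bound atom into $\atomsup\setminus\dom(\Gamma)$ so that the recursive interpretation is well-defined; this is a routine appeal to equivariance together with condition~8 of Definition~\ref{defn.holes.lambda.model}.
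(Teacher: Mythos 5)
Your proposal is correct and follows essentially the same route as the paper: the old cases are inherited from Theorem~\ref{thrm.type.soundness}, the \rulefont{L} case uses condition~3 of Definition~\ref{defn.holes.lambda.model} after renaming the bound atom, and the \rulefont{Meta} case combines the inductive hypothesis on the $s_i$ with $\varsigma(X)\in\idenot{\Gamma}{\phi}$ (from $\Gamma\ment\varsigma$) and Lemma~\ref{lemm.holes.easy}, exactly as in the paper. The only quibble is your passing appeal to condition~8 in the \rulefont{L} case, which is not what that condition is for, but nothing in your argument depends on it.
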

\begin{proof}
We consider two cases; the rest is as proof of Theorem~\ref{thrm.type.soundness}:
\begin{itemize*}
\item \emph{The case of \rulefont{L}.}\quad
Suppose $\Gamma,a{:}\phi\cent r:\psi$ and $a\in\atomsup$ so that by \rulefont{L} $\Gamma\cent\lam{a{:}\phi}r:\phi\fto\psi$.
By inductive hypothesis $\idenot{\varsigma;\Gamma,a{:}\phi}{r}\in\idenot{\Gamma,a{:}\phi}{\psi}$.
It follows from condition~3 of Definition~\ref{defn.holes.lambda.model}
that $[a{:}\phi]\idenot{\varsigma;\Gamma,a{:}\phi}{r}\in\idenot{\Gamma}{\phi\fto\psi}$.
By Definition~\ref{defn.hole.interp},\ $\idenot{\varsigma;\Gamma}{\lam{a{:}\phi}r}\in\idenot{\Gamma}{\phi\fto\psi}$.
\item \emph{The case of \rulefont{Meta}.}\quad
Suppose $\Gamma\cent s_i:\psi_i$ and $X{:}\phi\in\Gamma$ and $b_i{:}\psi_i\in\Gamma$ for $1{\leq}i{\leq}n$, so that by \rulefont{Meta} $\Gamma\cent X\,[b_i{\ssm}s_i]_i:\phi$.

By inductive hypothesis $\idenot{\varsigma;\Gamma}{s_i}\in\idenot{\Gamma}{\psi_i}$ and by assumption $\varsigma(X)\in\idenot{\Gamma}{\phi}$.
It follows by Lemma~\ref{lemm.holes.easy} that $\varsigma(X)\,[b_i{\ssm}\idenot{\varsigma;\Gamma}{s_i}]_i\in\idenot{\Gamma}{\phi}$. 
\qedhere\end{itemize*}
\end{proof}

\subsection{Soundness for $\beta$-conversion}

%

\begin{lemm}
\label{lemm.holes.beta.1}
Suppose $a_i{:}\phi_i\in\Gamma$ for $i\in A$ and $\Gamma\cent r:\psi$ and $\Gamma\cent s_i:\phi_i$ for $i\in A$.
Suppose $\Gamma\ment\varsigma$.
Then $\idenot{\varsigma;\Gamma}{r[a_i\ssm s_i]_{i{\in} A}}=\idenot{\varsigma;\Gamma}{r}[a_i\sm\idenot{\varsigma;\Gamma}{s_i}]_{i{\in}A}$. 
\end{lemm}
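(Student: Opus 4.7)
The plan is to prove this by induction on the structure of $r$ (equivalently, on the derivation of $\Gamma\cent r:\psi$), generalising Lemma~\ref{lemm.st.beta.1} to accommodate (a) \emph{simultaneous} level-1 substitution on both sides and (b) the new case of moderated unknowns. To keep the reasoning local, I first unpack the semantic simultaneous substitution of Definition~\ref{defn.sim.sub} into a swap with fresh atoms $c_i$ followed by iterated single-atom substitutions, so the axioms \rulefont{Suba}, \rulefont{Sub\#}, \rulefont{SubApp}, and \rulefont{Sub\text{$\lambda$}} can be applied one atom at a time.

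The cases of atoms outside the domain of the substitution and of constants (where equivariance gives $\supp(C^\iden)=\varnothing$ so that \rulefont{Sub\#} applies), of atoms $a_j$ with $j\in A$ (using the freshness swap together with \rulefont{Suba}), of application (using \rulefont{SubApp} and the inductive hypothesis), and of abstraction $\lam{c{:}\chi}r'$ (where $c$ is first $\alpha$-renamed via Theorem~\ref{thrm.equivar} to avoid clashing with any $a_i$, $\supp(\idenot{\varsigma;\Gamma}{s_i})$, or $\dom(\Gamma)$, after which \rulefont{Sub\text{$\lambda$}} and the inductive hypothesis apply) all follow the same pattern as in the proof of Lemma~\ref{lemm.st.beta.1}.

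The substantive new case is $r=X[c_j\ssm t_j]_{j\in B}$. Here I would expand the left-hand side using the composition clause of Definition~\ref{defn.holes.sub}, apply Definition~\ref{defn.hole.interp} to the resulting moderated unknown, and rewrite each $\idenot{\varsigma;\Gamma}{t_j[a_i\ssm s_i]_{i\in A}}$ by the inductive hypothesis. I would then verify that this equals $\varsigma(X)[c_j\sm\idenot{\varsigma;\Gamma}{t_j}]_{j\in B}[a_i\sm\idenot{\varsigma;\Gamma}{s_i}]_{i\in A}$ by reducing both sides, on the element $\varsigma(X)$, via \rulefont{Suba}--\rulefont{Sub\text{$\lambda$}}. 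The key coincidence here is that the syntactic garbage-collection condition ``$a_i\in\atomsdown$'' for surviving outer substitutions on moderated unknowns (Definition~\ref{defn.holes.sub}) exactly mirrors the semantic fact that $\supp(\varsigma(X))\subseteq\atomsdown$ (Definition~\ref{defn.holes.valuation}): any outer substitution for some $a_i\in\atomsup\setminus\{c_j\mid j\in B\}$ is eliminated on the semantic side by \rulefont{Sub\#}.

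The main obstacle is precisely this bookkeeping in the moderated-unknown case, namely lining up the nested syntactic composition of level-1 substitutions with the iterated semantic single-atom substitutions produced by Definition~\ref{defn.sim.sub}. A small auxiliary calculation---essentially a composition lemma for substitution on any element whose support lies in $\atomsdown$, derivable from \rulefont{Suba}--\rulefont{Sub\text{$\lambda$}} and the equivariance clauses---makes the two sides syntactically identical; no new principles are required beyond what has already been established.
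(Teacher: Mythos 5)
Your proposal is correct and follows essentially the same route as the paper: induction on the typing derivation, with the standard cases handled as in Lemma~\ref{lemm.st.beta.1} and the \rulefont{Meta} case resolved by unfolding the syntactic composition clause of Definition~\ref{defn.holes.sub}, applying the inductive hypothesis to the moderating terms, and matching the result against an iterated-semantic-substitution composition fact derived from Definition~\ref{defn.sim.sub} and the \rulefont{Sub} axioms. Your observation that the syntactic garbage-collection condition $a_i\in\atomsdown$ is mirrored semantically by $\supp(\varsigma(X))\subseteq\atomsdown$ and \rulefont{Sub\#} is exactly the content the paper compresses into its unproved ``fact''.
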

\begin{proof}
By a routine induction on the derivation of $\Gamma\cent r:\psi$.
We consider two cases:
\begin{itemize*}
\item \emph{The case of \rulefont{L} for $\lam{c{:}\chi}r$.} \quad
Renaming if necessary, suppose $c$ is fresh (so that $c\not\in\bigcup_{i{\in}A}(\supp(\idenot{\varsigma;\Gamma}{s_i})\cup\fa(s_i))\cup\dom(\Gamma)$).
We reason as follows:
$$
\begin{array}{r@{\ }l@{\qquad}l}
\idenot{\varsigma;\Gamma}{\lam{c{:}\chi}r}[a_i\sm\idenot{\varsigma;\Gamma}{s_i}]_{i{\in}A}=&
([c{:}\chi]\idenot{\varsigma;\Gamma}{r})[a_i\sm\idenot{\varsigma;\Gamma}{s_i}]_{i{\in}A}
&\text{Definition~\ref{defn.hole.interp}}
\\
=&
[c{:}\chi](\idenot{\varsigma;\Gamma}{r}[a_i\sm\idenot{\varsigma;\Gamma}{s_i}]_{i{\in}A})
&\rulefont{Sub\text{$\lambda$}},\ c\not\in\supp(\idenot{\varsigma;\Gamma}{s})
\\
=&
[c{:}\chi]\idenot{\varsigma;\Gamma}{r[a_i\ssm s_i]_{i{\in}A}}
&\text{ind. hyp.}
\\
=&
\idenot{\varsigma;\Gamma}{\lam{c{:}\chi}(r[a_i\ssm s_i]_{i{\in}A})}
&\text{Definition~\ref{defn.hole.interp}}
\\
=&
\idenot{\varsigma;\Gamma}{(\lam{c{:}\chi}r)[a_i\ssm s_i]_{i{\in}A}}
&c\not\in\fa(s)
\end{array}
$$
\item
\emph{The case of \rulefont{Meta}.}\quad
We reason as follows, where $B=\{1,\dots,n\}$ and $\{b_j\mid j\in B\}\subseteq\atomsdown$:
$$
\hspace{-3.5em}\begin{array}{r@{\ }l@{\quad}l}
\idenot{\varsigma;\Gamma}{X[a_j{\ssm}t_j]_{j{\in}B}}[a_i{\sm}\idenot{\varsigma;\Gamma}{s_i}]_{i{\in}A}
=&
\varsigma(X)[a_j{\sm}\idenot{\varsigma;\Gamma}{t_j}]_{i{\in}B}[a_i{\sm}\idenot{\varsigma;\Gamma}{s_i}]_{i{\in}A}
&\text{Defn.~\ref{defn.hole.interp}}
\\
=&
\varsigma(X)[(a_i{\sm}\idenot{\varsigma;\Gamma}{s_i})_{i{\in}A{\setminus}B,\ a_i{\in}\atomsdown},(a_j{\sm}\idenot{\varsigma;\Gamma}{t_j}[a_i{\sm}\idenot{\varsigma;\Gamma}{s_i}]_{i{\in}A})_{j{\in}B}]
&\text{fact}
\\
=&
\varsigma(X)[(a_i{\sm}\idenot{\varsigma;\Gamma}{s_i})_{i{\in}A{\setminus}B,\ a_i{\in}\atomsdown},(a_j{\sm}\idenot{\varsigma;\Gamma}{t_j[a_i{\ssm}s_i]_{i{\in}A}})_{j{\in}B}]
&\text{ind. hyp.}
\\
=&
\idenot{\varsigma;\Gamma}{X[(a_i{\ssm}s_i)_{i{\in}A{\setminus}B,\ a_i{\in}\atomsdown},(a_j{\ssm}t_j[a_i{\ssm}s_i]_{i{\in}A})_{j{\in}B}]}
&\text{Defn.~\ref{defn.hole.interp}}
\\
=&
\idenot{\varsigma;\Gamma,a{:}\phi}{X[a_j{\ssm}t_j]_{j{\in}B}[a_i{\ssm}s_i]_{i{\in}A}}
&\text{Defn.~\ref{defn.holes.sub}}
\end{array}
$$
\end{itemize*}
Some detailed calculations are hidden in the `fact' used above.
This follows using \rulefont{Sub\text{$\lambda$}} and \rulefont{SubApp} from Definition~\ref{defn.sim.sub}, and is one reason that in that definition we `freshened' the $b_i$ to $c_i$; to avoid clash.
\end{proof}

\begin{lemm}
\label{lemm.holes.beta.2}
Write `$\Gamma\cent r,s:\phi$' as shorthand for `$\Gamma\cent r:\phi$ and $\Gamma\cent s:\phi$'.
Suppose $\Gamma\ment\varsigma$.
\begin{enumerate*}
\item
Suppose $\Gamma\cent r,r':\phi\fto\phi'$ and $\Gamma\cent s,s':\phi$. 
If $\idenot{\varsigma;\Gamma}{r}{=}\idenot{\varsigma;\Gamma}{r'}$ and $\idenot{\varsigma;\Gamma}{s}{=}\idenot{\varsigma;\Gamma}{s'}$ then $\idenot{\varsigma;\Gamma}{rs}{=}\idenot{\varsigma;\Gamma}{r's'}$. 
\item
Suppose $\Gamma,a{:}\phi\cent r,r':\psi$.
If $\idenot{\varsigma;\Gamma,a{:}\phi}{r}=\idenot{\varsigma;\Gamma,a{:}\phi}{r'}$ then $\idenot{\varsigma;\Gamma}{\lam{a{:}\phi}r}=\idenot{\varsigma;\Gamma}{\lam{a{:}\phi}r'}$.
\item
Suppose $X{:}\phi\in\Gamma$ and $\{b_i{:}\phi_i\mid 1{\leq}i{\leq}n\}\subseteq\Gamma$.
Suppose $\Gamma\cent s_j,s'_j:\psi_j$ and $\idenot{\varsigma;\Gamma}{s_j}=\idenot{\varsigma;\Gamma}{s'_j}$ for $1{\leq}j{\leq}n$.
Then $\idenot{\varsigma;\Gamma}{X[b_i{\ssm}s_i]_1^n}=\idenot{\varsigma;\Gamma}{X[b_i{\ssm}s'_i]_1^n}$.

\end{enumerate*}
\end{lemm}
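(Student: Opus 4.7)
My plan is to handle the three parts in order, with each following from inspection of Definition~\ref{defn.hole.interp} combined with the observation that the semantic operations $\bullet$, $[a{:}\phi](-)$ and simultaneous substitution $[b_i{\sm}-]_1^n$ are genuine functions of their arguments, so that substituting equals for equals preserves equality. There is no need for an induction on the structure of the terms; each part concerns only the topmost constructor, and every hypothesis is already phrased in semantic terms.

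For part~1, I unfold both interpretations using Definition~\ref{defn.hole.interp} to obtain $\idenot{\varsigma;\Gamma}{rs}=\idenot{\varsigma;\Gamma}{r}\bullet\idenot{\varsigma;\Gamma}{s}$ and similarly for $r's'$, and then substitute equals for equals in the two arguments of $\bullet$. Part~2 is handled identically, unfolding $\idenot{\varsigma;\Gamma}{\lam{a{:}\phi}r}=[a{:}\phi]\idenot{\varsigma;\Gamma,a{:}\phi}{r}$ and using that $[a{:}\phi](-)$ is a function (clause~3 of Definition~\ref{defn.holes.lambda.model}); in effect, this is Proposition~\ref{prop.xi} replayed in the setting with unknowns.

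Part~3 is the only case with any content. By Definition~\ref{defn.hole.interp} it suffices to show $\varsigma(X)[b_i{\sm}\idenot{\varsigma;\Gamma}{s_i}]_1^n = \varsigma(X)[b_i{\sm}\idenot{\varsigma;\Gamma}{s'_i}]_1^n$. I unpack the simultaneous substitution using Definition~\ref{defn.sim.sub}, picking atoms $c_1,\dots,c_n$ fresh for $\varsigma(X)$ and for every $\idenot{\varsigma;\Gamma}{s_i}$ and $\idenot{\varsigma;\Gamma}{s'_i}$ (possible since each of these has finite support). Both sides then rewrite to an $n$-fold iteration of single-atom substitutions $[c_i{\sm}-]$ applied to the common element $((c_1\ b_1)\circ\dots\circ(c_n\ b_n))\act\varsigma(X)$, and at the $i$th step the arguments replaced agree by hypothesis; since single-atom substitution is a binary function on the model, substituting equals for equals iteratively finishes the case. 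The only real obstacle is bookkeeping: I must invoke the well-definedness facts for Definition~\ref{defn.sim.sub} (independence of the fresh choice of $c_i$ and of the order of the single substitutions) to legitimise the rewriting on both sides. These are already packaged in the definition and its footnote via the references to \cite{gabbay:fountl}, so the argument reduces to a brief `congruence on defined operators' calculation.
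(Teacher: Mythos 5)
Your proposal is correct and takes essentially the same route as the paper, whose entire proof is the one-liner ``These are facts of equality in sets'' --- i.e.\ precisely your observation that $\bullet$, $[a{:}\phi](-)$ and simultaneous substitution are well-defined operations, so replacing equal arguments yields equal results. Your extra unpacking of Definition~\ref{defn.sim.sub} in part~3 is a harmless elaboration (and since $\idenot{\varsigma;\Gamma}{s_i}=\idenot{\varsigma;\Gamma}{s'_i}$ the two sides share the same supports, so a single choice of fresh $c_i$ automatically serves both).
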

\begin{proof}
These are facts of equality in sets.
\end{proof}

\begin{corr}[\bf Second soundness theorem]
If $r\beq s$ (Defn.~\ref{defn.holes.beta.equivalence}) and $\Gamma\cent r:\phi$ and $\Gamma\models\zeta$ then $\idenot{\varsigma;\Gamma}{r}=\idenot{\varsigma;\Gamma}{s}$.
\end{corr}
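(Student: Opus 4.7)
The plan is a routine induction on the derivation of $r \beq s$. Since $\beq$ is the least congruence (in the sense of Definition~\ref{defn.holes.congruence}) closed under the $\beta$-axiom $(\lam{a{:}\phi}r)t \beq r[a\ssm t]$, and since $\beta$-equivalence is understood as an equivalence relation in the style of Definition~\ref{defn.beta.equivalence}, the cases are: reflexivity, symmetry, and transitivity of $\beq$; the three congruence rules \rulefont{CongApp}, \rulefont{\xi}, and \rulefont{CongX}; and the $\beta$-axiom itself. The equivalence closure cases are immediate from the corresponding properties of set-theoretic equality, and the three congruence cases follow directly from the matching clauses of Lemma~\ref{lemm.holes.beta.2} once the inductive hypothesis supplies equality of the relevant subterms' denotations.

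For the $\beta$-axiom I would unfold the interpretation using Definition~\ref{defn.hole.interp} together with the sugar $x[a\sm y] = ([a{:}\phi]x) \bullet y$:
$$
\idenot{\varsigma;\Gamma}{(\lam{a{:}\phi}r)t} \;=\; ([a{:}\phi]\idenot{\varsigma;\Gamma,a{:}\phi}{r}) \bullet \idenot{\varsigma;\Gamma}{t} \;=\; \idenot{\varsigma;\Gamma,a{:}\phi}{r}[a\sm\idenot{\varsigma;\Gamma}{t}],
$$
and then apply Lemma~\ref{lemm.holes.beta.1}, instantiated with a single substitutee $A=\{a\}$, to rewrite the right-hand side as $\idenot{\varsigma;\Gamma}{r[a\ssm t]}$. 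The well-typedness of $r[a\ssm t]$ required to make this interpretation meaningful is furnished by Lemma~\ref{lemm.hole.sub.single}.

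The only piece of genuine bookkeeping arises in the \rulefont{\xi} case, where the inductive hypothesis is needed for subterms typed in the extended context $\Gamma,a{:}\phi$, and one must verify that $\Gamma,a{:}\phi \ment \varsigma$ whenever $\Gamma \ment \varsigma$. This is a consequence of condition~5 of Definition~\ref{defn.lambda.model} (preserved verbatim in Definition~\ref{defn.holes.lambda.model}): applying $\idenot{\Gamma \cap \Gamma'}{\psi} = \idenot{\Gamma}{\psi}\cap \idenot{\Gamma'}{\psi}$ with $\Gamma' = \Gamma,a{:}\phi$ yields $\idenot{\Gamma}{\psi} \subseteq \idenot{\Gamma,a{:}\phi}{\psi}$, so the image of $\varsigma$ is automatically carried into the enlarged context. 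An analogous, even more trivial observation handles \rulefont{CongX}, where the context does not change at all.

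I do not expect any substantive obstacle: all the heavy lifting, including the delicate calculations about capture-avoiding substitution and about moderated unknowns, has already been carried out in Lemmas~\ref{lemm.hole.sub.single}, \ref{lemm.holes.beta.1}, and \ref{lemm.holes.beta.2}. If anything counts as a subtle step, it is the $\beta$-case, where one must pair the syntactic substitution rule of Definition~\ref{defn.holes.sub} against the semantic substitution of Definition~\ref{defn.sim.sub} via Lemma~\ref{lemm.holes.beta.1}; but this matching is precisely what that lemma was proved to do.
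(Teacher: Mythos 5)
Your proof is correct and takes essentially the same route as the paper, whose entire argument is ``Using Lemmas~\ref{lemm.holes.beta.1} and~\ref{lemm.holes.beta.2}''; you have simply made explicit the induction on the derivation of $r\beq s$, dispatching the congruence cases with Lemma~\ref{lemm.holes.beta.2} and the $\beta$-axiom with Lemma~\ref{lemm.holes.beta.1}, exactly as intended. The one piece of bookkeeping you add---deriving $\Gamma,a{:}\phi\ment\varsigma$ from $\Gamma\ment\varsigma$ via condition~5 of Definition~\ref{defn.lambda.model}---is a genuine detail the paper glosses over, and you handle it correctly.
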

\begin{proof}
Using Lemmas~\ref{lemm.holes.beta.1} and~\ref{lemm.holes.beta.2}.
\end{proof}

We would expect a completeness result like Theorem~\ref{thrm.st.complete} to hold and have a similar proof.
Details will be in a journal version.
We should verify soundness under instantiating unknowns:
 

\begin{thrm}[\bf Third soundness theorem]
Suppose $\Gamma,X{:}\chi\cent r:\phi$ and $\Gamma\cent t:\psi$.
Suppose $\Gamma\ment\varsigma$.

Then $\idenot{\varsigma;\Gamma}{r[X{\ssm} t]}=\idenot{\varsigma[X{\ssm}\idenot{\varsigma;\Gamma}{t}];\Gamma}{r}$.
\end{thrm}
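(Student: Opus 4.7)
The plan is to proceed by induction on the derivation of $\Gamma,X{:}\chi\cent r:\phi$ (equivalently, on the structure of $r$), writing $\varsigma' = \varsigma[X{\ssm}\idenot{\varsigma;\Gamma}{t}]$ throughout. Note that by Theorem~\ref{thrm.hole.type.soundness} $\idenot{\varsigma;\Gamma}{t}\in\idenot{\Gamma}{\chi}$, so $\Gamma\ment\varsigma'$ holds, so both sides of the putative equation are well-defined; note also that by condition~8 of Definition~\ref{defn.holes.lambda.model} the interpretation does not care whether $X$ is in $\Gamma$ or not, so the passage between $\varsigma;\Gamma$ and $\varsigma;\Gamma,X{:}\chi$ is harmless.

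The cases \rulefont{V}, \rulefont{C} are immediate, since $a[X{\ssm}t]=a$ and $C[X{\ssm}t]=C$ and the interpretation of an atom or constant does not consult $\varsigma$. The case \rulefont{A} follows by a direct application of the inductive hypothesis to both components. The case \rulefont{L} for $\lam{c{:}\phi'}r'$ proceeds by renaming $c$ using equivariance (Theorem~\ref{thrm.equivar}) so that $c\in\atomsup\setminus(\dom(\Gamma)\cup\fa(t)\cup\supp(\idenot{\varsigma;\Gamma}{t}))$; then Definition~\ref{defn.st.sub.2} gives $(\lam{c{:}\phi'}r')[X{\ssm}t] = \lam{c{:}\phi'}(r'[X{\ssm}t])$, Definition~\ref{defn.hole.interp} pushes the denotation brackets under the abstraction on both sides, and the inductive hypothesis applied in the enlarged context $\Gamma,c{:}\phi'$ finishes the case.

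The main case, and the only interesting one, is \rulefont{Meta} when the head unknown is $X$ itself: $r = X[b_i{\ssm}s_i]_1^n$ with $b_i{:}\psi_i\in\Gamma$ and $\Gamma,X{:}\chi\cent s_i{:}\psi_i$. By Definition~\ref{defn.st.sub.2} we have $r[X{\ssm}t] = t[b_i{\ssm}s_i[X{\ssm}t]]_1^n$. Applying Lemma~\ref{lemm.holes.beta.1} (with the $b_i$ playing the role of the $a_i$) gives
$$
\idenot{\varsigma;\Gamma}{r[X{\ssm}t]} = \idenot{\varsigma;\Gamma}{t}\,[b_i\sm\idenot{\varsigma;\Gamma}{s_i[X{\ssm}t]}]_1^n .
$$
The inductive hypothesis applied to each $s_i$ rewrites $\idenot{\varsigma;\Gamma}{s_i[X{\ssm}t]}$ as $\idenot{\varsigma';\Gamma}{s_i}$. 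Since $\varsigma'(X)=\idenot{\varsigma;\Gamma}{t}$, unfolding Definition~\ref{defn.hole.interp} on the right-hand side gives exactly $\idenot{\varsigma';\Gamma}{X[b_i{\ssm}s_i]_1^n}$, which closes this case. The remaining \rulefont{Meta} subcase for a head unknown $Y\neq X$ is routine: $Y[b_i{\ssm}s_i]_1^n[X{\ssm}t] = Y[b_i{\ssm}s_i[X{\ssm}t]]_1^n$, and we invoke the inductive hypothesis on each $s_i$ together with the fact that $\varsigma'(Y)=\varsigma(Y)$.

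The main obstacle is bookkeeping in the \rulefont{Meta}-at-$X$ case: one must line up the level 2 rewriting of moderations with the semantic simultaneous substitution of Definition~\ref{defn.sim.sub}, and this is precisely what Lemma~\ref{lemm.holes.beta.1} was proved for. The \rulefont{L} case is a secondary obstacle, since there the side-condition of Definition~\ref{defn.st.sub.2} on the bound atom must be aligned with the side-condition of the $\lambda$-clause in Definition~\ref{defn.hole.interp}, but equivariance makes this a matter of a single $\alpha$-renaming.
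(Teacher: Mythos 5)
Your proposal is correct and takes essentially the same route as the paper: induction on the typing derivation with the crux being the \rulefont{Meta}-at-$X$ case, resolved by unfolding Definition~\ref{defn.hole.interp} and invoking Lemma~\ref{lemm.holes.beta.1}. You are in fact slightly more careful than the paper's own (very terse) proof, which treats only that case and elides the inductive-hypothesis step rewriting $\idenot{\varsigma;\Gamma}{s_i[X{\ssm}t]}$ as $\idenot{\varsigma';\Gamma}{s_i}$ when $X$ occurs in the moderating terms $s_i$.
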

\begin{proof}
We consider \rulefont{Meta} for $X$.
Suppose $\Gamma=\Gamma',X{:}\chi,(b_j{:}\psi_j)_1^n$ and $\Gamma\cent s_i{:}\psi_i$ for $1{\leq}i{\leq}n$ so that by \rulefont{Meta} $\Gamma',X{:}\chi,(b_j{:}\psi_j)_1^n\cent X\,[b_i{\ssm}s_i]_i$.
Then
$
\idenot{\varsigma[X\ssm \idenot{\varsigma;\Gamma}{t}];\Gamma}{X\,[b_i{\ssm}s_i]_i}
\stackrel{\text{Defn.~\ref{defn.hole.interp}}}{=}
\idenot{\varsigma;\Gamma}{t}[b_i{\sm}\idenot{\varsigma;\Gamma}{s_i}]_i
\stackrel{\text{Lemm.~\ref{lemm.holes.beta.1}}}{=}
\idenot{\varsigma;\Gamma}{t[b_i{\ssm}s_i]_i} . 
$
\end{proof}

\section{Conclusions}

We have built a semantics for the simply-typed $\lambda$-calculus (STLC), based on nominal sets.
In keeping with the `nominal' philosophy, variables (names) are denoted by themselves. 
This draws certain structure that is normally external to the denotation inside it, and this extra structure excludes some arguably pathological homomorphisms between models. 
We also exploit the semantics to existential variables, or `holes' (suggesting that we do not just get more models out of this nominal semantics, but also more languages).

The constructions are not really any harder than for traditional STLC semantics.
When reading for instance Definition~\ref{defn.lambda.model}, the reader should mentally place this side-by-side with a \emph{full} specification of traditional STLC semantics, including for instance a precise definition of valuations as graphs (these are functions with the general shape $(\mathbb A\to X)\to X$).
Our nominal semantics for STLC is no harder than what the reader already knows; it is just different.

Atypically for nominal techniques so far as exemplified e.g.\ by \cite{gabbay:nomu-jv,gabbay:frepbm,cheney:alppl,gabbay:pernl}, atoms have non-trivial types. 
These, if they assign atoms any type information at all, assign them `the type of atoms'.
There has been some work assigning more interesting types to atoms \cite{gabbay:curstn}, but not in denotations.

In the course of doing all these things, we note echoes of other strands of research.
The distinction between $b\in\atomsdown$ and $b:\phi\in\Gamma$ is an instance of the distinction between meaning and denotation (only finitely many of the atoms in the permission set of an unknown have existential import in the denotation) \cite{russell:ond,lambert:exiir}.
Our use of $\atomsdown$ and $\atomsup$, which is borrowed from \cite{gabbay:perntu,gabbay:perntu-jv}, is reminiscent of the two kinds of variable used by Frege \cite{heijenoort:fregsb} (for a more modern presentation see e.g. \cite[Chapter~IV, Section~1]{smullyan:firol}).
This is more an analogy than a precise correspondence and we will discuss matters further in a longer paper where we have more space to develop the syntax.
We still have only one set of atoms and the `nominal' constructions, notably the notions of support, binding, freshness, and nominal set, are unchanged.
We have freely imported ideas from (permissive) nominal terms, notably in our treatment of existential variables.

\subsection{Related work}

\paragraph{Valuations and unknowns.} 
We gave unknowns a semantics using valuations in Definition~\ref{defn.hole.interp}.
Arguably it is disappointing: why map atoms to themselves in a denotation but then switch to another (more traditional) methodology for unknowns?
One answer is that atoms are universal variables (could be replaced by anything) whereas unknowns are existential variables (must be replaced by something), so it is reasonable to interpret them using a valuation, and perhaps we should. 
Indeed there is a precedent for this: atoms correspond to \emph{$\delta$-variables} and unknowns to \emph{$\gamma$-variables} from \cite{Wirth:desid}.
Making this formal by considering a paper similar to this one but aimed at first-order logic is a topic of current research.

Still, there is an interesting alternative.
In \cite{gabbay:twolns} a direct nominal semantics is explored for unknowns $X$, analogous to how atoms $a$ map to themselves in this paper, called \emph{two-level nominal sets}.
Two-level nominal sets \emph{with substitutions} would provide a theory of incompleteness in which holes are directly represented in the semantics.
There is no need for that in this paper because we have no level 3 variables (in the style e.g. of the $\lambda$-context calculus \cite{gabbay:lamcce}); but if there were, two-level nominal sets might be not only interesting, but necessary. 

\paragraph{Models as presheafs.}
The reader familiar with category theory will recognise in Definition~\ref{defn.lambda.model} a presheaf.
In fact, we have enriched the usual presheaf $\f{Sets}^{\mathbb I}$ (presheaves over finite sets and injections between them) to a presheaf over an indexing category enriched with types.
Condition~5 of the two definitions (for $\cap$) states that these presheaves should preserve pullbacks of monos; this is the critical property required for the sets-based presentation of this paper to work \cite{gabbay:nomrs}.
Presheaves enriched over types have appeared in \cite{zsido:typas}, without the nominal sets style presentation and written for a different audience (one stemming from view of syntax and substitution based on \cite{fiore:abssvb}).
The presheaves are used differently: by considering initial objects, inductive datatypes of \emph{well-typed} syntax-with-binding are constructed.

\paragraph{Other theories of functions.}
\emph{Combinatory algebra} (CA) assumes constants $S$, $K$, and $I$.  Axioms allow them to model the $\lambda$-calculus.
However, CA is strictly weaker than the theory of $\beta$-conversion; the \rulefont{\xi} rule cannot be equationally axiomatised, because $\lambda$ cannot be directly expressed (though any given $\lambda$-term can be compiled to combinators).
This can be fixed using explicit indeterminates which, from the point of view of this paper, look a lot like atoms \cite{selinger:lamca}.

Alternatively, \emph{lambda-abstraction algebras} (LAAs) are a first-order axiomatisation which does satisfy $\xi$ \cite{salibra:appual}; again, from the point of view of this paper LAAs look much like the axioms we have considered.
LAAs are not typed; a `nominal' equivalent of them was considered by the first author with Mathijssen \cite{gabbay:nomalc}.  So this paper is significantly different from both since, as we see comparing this paper with \cite{gabbay:nomalc}, the addition of types makes a real difference to the models.
 LAAs take semantics in `ordinary' sets, so their semantics is not well-pointed in the sense of this paper, and we do not obtain the language with meta-variables which we have developed here or relate so directly with a wider research context (e.g. into nominal techniques).   One further subtle feature of LAA models is that they do not have finite support (we do not argue whether this is good or bad; we merely observe this as a significant difference).  

Salibra and others have thought deeply about the lattice properties of $\lambda$-calculus models.  As a final note we mention that nominal algebra satisfies an HSPA theorem \cite{gabbay:nomahs}, and \emph{permissive-}nominal algebra satisfies an HSP theorem \cite{gabbay:nomtnl}.  This has also been considered by Kurz and others \cite{kurz:unians}.  The deeper theory here---how theorems of universal algebra applied to $\lambda$-calculus adapt to the nominal context---remains unexplored. 

\paragraph{Other theories of existential variables.}
In implemented systems like LF and Isabelle \cite{PaulsonLC:fougtp} these are handled as a special syntactic category of higher-order variable.
That is, an unknown of type $\iota$ depending on (universal) variables of type $\tau$ and $\tau'$ is modelled by a variable of type $\tau\to\tau'\to\iota$.
We make no claim that our model of existential variables is better in implementation---it is simply too early to tell---but generally speaking we are against solving problems by moving to higher orders. 
Plenty of complexity can be \emph{encoded} in function spaces, and this is fine for implementation, but encoding something is not the same as having a good mathematical model of it. 
Jojgov includes a excellent and detailed discussion of this issue in \cite{jojgov:holbp}, which is his own analysis of incompleteness;
intuitively, by conflating $\beta$-conversion with incompleteness it becomes impossible to distinguish between a complete derivation of higher type, and an incomplete derivation of lower type.
We add that the denotation of $\tau\to\tau'\to\iota$ is uncountable, even if the denotations of $\tau$, $\tau'$, and $\iota$ are countable.
We would not immediately expect there to be uncountably many existential variables of type $\iota$, so if only on the grounds of size we would hope for something smaller.
Our denotations deliver this: an existential variable of type $\iota$ is just an unknown $X:\iota$.

Contextual modal type theory (CMTT) has two levels of variable; it enriches STLC with `modal types' representing open code \cite{nanevski:conmtt}.
However, level 2 variables of CMTT are not existential variables; they are a species of \emph{intensional} variable ranging over code.
Making this formal using a nominal semantics related to the semantics of this paper, is current research by the first author.

\subsection{Future work}

We note that, as it stands, there is no general mathematical framework for the study of incomplete terms in type theory.
Implementors of proof assistants invent \emph{ad hoc} methods for representing incomplete terms, representing incomplete proof states, in their systems.
Methods evolve more through trial and error than deep mathematical insight.
For instance, early versions of Coq used a complex system involving two syntactic classes---`existential variables' and `metavariables'---for representing incomplete proof states.
Matita~\cite{asperti:crafting:2007}, whose design was influenced by lessons learned in Coq's development, used from the outset a much simpler scheme where the concepts of `existential variable' and `metavariable' are unified~\cite{sacerdoti:thesis}.

We hope that the work presented in Section~\ref{sect.existential.variables} forms the basis for further, mathematical study of incomplete terms in type theory. 
For instance, the model of STLC in Definition~\ref{defn.lambda.model} could be extended to a dependent type theory like e.g. compact $\lambda P$ \cite[Subsection~14.2, Figure~14.1]{sorensen:lecchi-book} (with or without incompleteness).

It should be fairly easy to internalise the substitution action for unknowns by adding $\lambda X$, thus obtaining a two-level system with logic and computation at both levels---the result should resemble the first two levels of the lambda-context calculus \cite{gabbay:lamcce}, but with a stronger theory of $\alpha$-equivalence and more reductions. 
One concrete application of this may be to expressing tactics---functions from incomplete derivations to incomplete derivations---in type-theory based theorem-provers, which need to program on terms (considered as computation or proof respectively).  More goes into such a design than metavariables, but the character of metavariables is key to that of existing implementations \cite{pientka:belfpr}.

Notions of incompleteness can be motivated by efficiency and speed; notably \cite{pientka:opthop} was motivated by optimising unification in LF.
These ideas have led to several implementations; an up-to-date overview is in \cite{pientka:belfpr}.
Note that the details of the syntax are different: the work uses a two-level type system with special types for closed code, and `meta-variables' range over \emph{closed} elements of the domain (i.e. $\supp(\varsigma(X))=\varnothing$, intuitively).
No general semantic theory has been given for this line of research, and we suspect that the nominal denotations of this paper could be turned to that task.


\begin{thebibliography}{10}
\providecommand{\bibitemdeclare}[2]{}
\providecommand{\urlprefix}{Available at }
\providecommand{\url}[1]{\texttt{#1}}
\providecommand{\href}[2]{\texttt{#2}}
\providecommand{\urlalt}[2]{\href{#1}{#2}}
\providecommand{\doi}[1]{doi:\urlalt{http://dx.doi.org/#1}{#1}}
\providecommand{\bibinfo}[2]{#2}

\bibitemdeclare{inbook}{asperti:crafting:2007}
\bibitem{asperti:crafting:2007}
\bibinfo{author}{Andrea Asperti}, \bibinfo{author}{Claudo~Sacerdoti Coen},
  \bibinfo{author}{Enrico Tassi} \& \bibinfo{author}{Stefano Zacchiroli}
  (\bibinfo{year}{2007}): \emph{\bibinfo{title}{Crafting a Proof Assistant}},
  pp. \bibinfo{pages}{18--32}.
\newblock \bibinfo{series}{Lecture Notes in Computer Science},
  \bibinfo{publisher}{Springer}, \doi{10.1007/978-3-540-74464-1\_2}.

\bibitemdeclare{article}{benzmuller:higose}
\bibitem{benzmuller:higose}
\bibinfo{author}{Christoph Benzm{\"u}ller}, \bibinfo{author}{Chad~E. Brown} \&
  \bibinfo{author}{Michael Kohlhase} (\bibinfo{year}{2004}):
  \emph{\bibinfo{title}{Higher-order semantics and extensionality}}.
\newblock {\sl \bibinfo{journal}{Journal of Symbolic Logic}}
  \bibinfo{volume}{69}, pp. \bibinfo{pages}{1027--1088},
  \doi{10.2178/jsl/1102022211}.

\bibitemdeclare{inproceedings}{cheney:alppl}
\bibitem{cheney:alppl}
\bibinfo{author}{James Cheney} \& \bibinfo{author}{Christian Urban}
  (\bibinfo{year}{2004}): \emph{\bibinfo{title}{Alpha-Prolog: A Logic
  Programming Language with Names, Binding and Alpha-Equivalence}}.
\newblock In \bibinfo{editor}{Bart Demoen} \& \bibinfo{editor}{Vladimir
  Lifschitz}, editors: {\sl \bibinfo{booktitle}{Proceedings of the 20th
  International Conference on Logic Programming (ICLP 2004)}}, {\sl
  \bibinfo{series}{Lecture Notes in Computer Science}} \bibinfo{volume}{3132},
  \bibinfo{publisher}{Springer}, pp. \bibinfo{pages}{269--283},
  \doi{10.1007/978-3-540-27775-0\_19}.

\bibitemdeclare{phdthesis}{sacerdoti:thesis}
\bibitem{sacerdoti:thesis}
\bibinfo{author}{Claudio~Sacerdoti Coen} (\bibinfo{year}{2004}):
  \emph{\bibinfo{title}{Mathematical Knowledge Management and Interactive
  Theorem Proving}}.
\newblock Ph.D. thesis, \bibinfo{school}{University of Bologna}.

\bibitemdeclare{inproceedings}{gabbay:pernl}
\bibitem{gabbay:pernl}
\bibinfo{author}{Gilles Dowek} \& \bibinfo{author}{Murdoch~J. Gabbay}
  (\bibinfo{year}{2010}):
  \emph{\bibinfo{title}{\href{http://www.gabbay.org.uk/papers.html\#pernl-cv}{%
Permissive Nominal Logic}}}.
\newblock In: {\sl \bibinfo{booktitle}{Proceedings of the 12th International
  {ACM} {SIGPLAN} Symposium on Principles and Practice of Declarative
  Programming ({PPDP} 2010)}}, pp. \bibinfo{pages}{165--176},
  \doi{10.1145/1836089.1836111}.

\bibitemdeclare{inproceedings}{gabbay:perntu}
\bibitem{gabbay:perntu}
\bibinfo{author}{Gilles Dowek}, \bibinfo{author}{Murdoch~J. Gabbay} \&
  \bibinfo{author}{Dominic~P. Mulligan} (\bibinfo{year}{2009}):
  \emph{\bibinfo{title}{\href{http://www.gabbay.org.uk/papers.html\#perntu}{Pe%
rmissive Nominal Terms and their Unification}}}.
\newblock In: {\sl \bibinfo{booktitle}{Proceedings of the 24th Italian
  Conference on Computational Logic (CILC'09)}}.

\bibitemdeclare{article}{gabbay:perntu-jv}
\bibitem{gabbay:perntu-jv}
\bibinfo{author}{Gilles Dowek}, \bibinfo{author}{Murdoch~J. Gabbay} \&
  \bibinfo{author}{Dominic~P. Mulligan} (\bibinfo{year}{2010}):
  \emph{\bibinfo{title}{\href{http://www.gabbay.org.uk/papers.html\#perntu-jv}%
{Permissive Nominal Terms and their Unification: an infinite, co-infinite
  approach to nominal techniques (journal version)}}}.
\newblock {\sl \bibinfo{journal}{Logic Journal of the {IGPL}}}
  \bibinfo{volume}{18}(\bibinfo{number}{6}), pp. \bibinfo{pages}{769--822},
  \doi{10.1093/jigpal/jzq006}.

\bibitemdeclare{incollection}{gabbay:curstn}
\bibitem{gabbay:curstn}
\bibinfo{author}{Maribel Fern{\'a}ndez} \& \bibinfo{author}{Murdoch~J. Gabbay}
  (\bibinfo{year}{2007}):
  \emph{\bibinfo{title}{\href{http://www.gabbay.org.uk/papers.html\#curstn}{Cu%
rry-style types for nominal terms}}}.
\newblock In: {\sl \bibinfo{booktitle}{Types for Proofs and Programs ({TYPES}
  06)}}, {\sl \bibinfo{series}{Lecture Notes in Computer Science}}
  \bibinfo{volume}{4502}, \bibinfo{publisher}{Springer}, pp.
  \bibinfo{pages}{125--139}, \doi{10.1007/978-3-540-74464-1\_9}.

\bibitemdeclare{article}{gabbay:nomr-jv}
\bibitem{gabbay:nomr-jv}
\bibinfo{author}{Maribel Fern{\'a}ndez} \& \bibinfo{author}{Murdoch~J. Gabbay}
  (\bibinfo{year}{2007}):
  \emph{\bibinfo{title}{\href{http://www.gabbay.org.uk/papers.html\#nomr-jv}{N%
ominal rewriting (journal version)}}}.
\newblock {\sl \bibinfo{journal}{Information and Computation}}
  \bibinfo{volume}{205}(\bibinfo{number}{6}), pp. \bibinfo{pages}{917--965},
  \doi{10.1016/j.ic.2006.12.002}.

\bibitemdeclare{inproceedings}{gabbay:clonre}
\bibitem{gabbay:clonre}
\bibinfo{author}{Maribel Fern{\'a}ndez} \& \bibinfo{author}{Murdoch~J. Gabbay}
  (\bibinfo{year}{2010}):
  \emph{\bibinfo{title}{\href{http://www.gabbay.org.uk/papers.html\#clonre}{Cl%
osed nominal rewriting and efficiently computable nominal algebra equality}}}.
\newblock In: {\sl \bibinfo{booktitle}{Electronic Proceedings in Theoretical
  Computer Science}}, \bibinfo{volume}{34}, pp. \bibinfo{pages}{37--51},
  \doi{10.4204/EPTCS.34.5}.

\bibitemdeclare{inproceedings}{fiore:abssvb}
\bibitem{fiore:abssvb}
\bibinfo{author}{Marcelo~P. Fiore}, \bibinfo{author}{Gordon~D. Plotkin} \&
  \bibinfo{author}{Daniele Turi} (\bibinfo{year}{1999}):
  \emph{\bibinfo{title}{Abstract Syntax and Variable Binding}}.
\newblock In: {\sl \bibinfo{booktitle}{Proceedings of the 14th {IEEE} Symposium
  on Logic in Computer Science ({LICS} 1999)}}, \bibinfo{publisher}{IEEE
  Computer Society Press}, pp. \bibinfo{pages}{193--202},
  \doi{10.1109/LICS.1999.782615}.

\bibitemdeclare{article}{gabbay:frelog}
\bibitem{gabbay:frelog}
\bibinfo{author}{Murdoch~J. Gabbay} (\bibinfo{year}{2007}):
  \emph{\bibinfo{title}{\href{http://www.gabbay.org.uk/papers.html\#frelog}{{F%
}resh {L}ogic}}}.
\newblock {\sl \bibinfo{journal}{Journal of Applied Logic}}
  \bibinfo{volume}{5}(\bibinfo{number}{2}), pp. \bibinfo{pages}{356--387},
  \doi{10.1016/j.jal.2005.10.012}.

\bibitemdeclare{article}{gabbay:nomahs}
\bibitem{gabbay:nomahs}
\bibinfo{author}{Murdoch~J. Gabbay} (\bibinfo{year}{2009}):
  \emph{\bibinfo{title}{\href{http://www.gabbay.org.uk/papers.html\#nomahs}{No%
minal Algebra and the {HSP} Theorem}}}.
\newblock {\sl \bibinfo{journal}{Journal of Logic and Computation}}
  \bibinfo{volume}{19}(\bibinfo{number}{2}), pp. \bibinfo{pages}{341--367},
  \doi{10.1093/logcom/exn055}.

\bibitemdeclare{article}{gabbay:fountl}
\bibitem{gabbay:fountl}
\bibinfo{author}{Murdoch~J. Gabbay} (\bibinfo{year}{2011}):
  \emph{\bibinfo{title}{\href{http://www.gabbay.org.uk/papers.html\#fountl}{Fo%
undations of nominal techniques: logic and semantics of variables in abstract
  syntax}}}.
\newblock {\sl \bibinfo{journal}{Bulletin of Symbolic Logic}}
  \bibinfo{volume}{17}(\bibinfo{number}{2}), pp. \bibinfo{pages}{161--229},
  \doi{10.2178/bsl/1305810911}.

\bibitemdeclare{incollection}{gabbay:nomtnl}
\bibitem{gabbay:nomtnl}
\bibinfo{author}{Murdoch~J. Gabbay} (\bibinfo{year}{2011}):
  \emph{\bibinfo{title}{\href{http://www.gabbay.org.uk/papers.html\#nomtnl}{No%
minal terms and nominal logics: from foundations to meta-mathematics}}}.
\newblock In: {\sl \bibinfo{booktitle}{Handbook of Philosophical Logic}},
  \bibinfo{volume}{17}, \bibinfo{publisher}{Kluwer}.

\bibitemdeclare{article}{gabbay:twolns}
\bibitem{gabbay:twolns}
\bibinfo{author}{Murdoch~J. Gabbay} (\bibinfo{year}{2011}):
  \emph{\bibinfo{title}{\href{http://www.gabbay.org.uk/papers.html\#twolns}{Tw%
o-level nominal sets and semantic nominal terms: an extension of nominal set
  theory for handling meta-variables}}}.
\newblock {\sl \bibinfo{journal}{Mathematical Structures in Computer Science}}
  \doi{10.1017/S0960129511000272}.
\newblock \bibinfo{note}{Published online}.

\bibitemdeclare{inproceedings}{gabbay:nomrs}
\bibitem{gabbay:nomrs}
\bibinfo{author}{Murdoch~J. Gabbay} \& \bibinfo{author}{Martin Hofmann}
  (\bibinfo{year}{2008}):
  \emph{\bibinfo{title}{\href{http://www.gabbay.org.uk/papers.html\#rens}{Nomi%
nal renaming sets}}}.
\newblock In: {\sl \bibinfo{booktitle}{Proceedings of the 15th International
  Conference on Logic for Programming, Artificial Intelligence, and Reasoning
  ({LPAR} 2008)}}, \bibinfo{publisher}{Springer}, pp.
  \bibinfo{pages}{158--173}, \doi{10.1007/978-3-540-89439-1\_11}.

\bibitemdeclare{article}{gabbay:lamcce}
\bibitem{gabbay:lamcce}
\bibinfo{author}{Murdoch~J. Gabbay} \& \bibinfo{author}{St{\'e}phane Lengrand}
  (\bibinfo{year}{2009}):
  \emph{\bibinfo{title}{\href{http://www.gabbay.org.uk/papers.html\#lamcce}{The
  lambda-context calculus (extended version)}}}.
\newblock {\sl \bibinfo{journal}{Information and computation}}
  \bibinfo{volume}{207}, pp. \bibinfo{pages}{1369--1400},
  \doi{10.1016/j.ic.2009.06.004}.

\bibitemdeclare{inproceedings}{gabbay:oneaah}
\bibitem{gabbay:oneaah}
\bibinfo{author}{Murdoch~J. Gabbay} \& \bibinfo{author}{Aad Mathijssen}
  (\bibinfo{year}{2006}):
  \emph{\bibinfo{title}{\href{http://www.gabbay.org.uk/papers.html\#oneaah}{On%
e-and-a-halfth-order logic}}}.
\newblock In: {\sl \bibinfo{booktitle}{Proceedings of the 8th {ACM-SIGPLAN}
  International Symposium on Principles and Practice of Declarative Programming
  ({PPDP} 2006)}}, \bibinfo{publisher}{ACM}, pp. \bibinfo{pages}{189--200},
  \doi{10.1145/1140335.1140359}.

\bibitemdeclare{article}{gabbay:capasn-jv}
\bibitem{gabbay:capasn-jv}
\bibinfo{author}{Murdoch~J. Gabbay} \& \bibinfo{author}{Aad Mathijssen}
  (\bibinfo{year}{2008}):
  \emph{\bibinfo{title}{\href{http://www.gabbay.org.uk/papers.html\#capasn-jv}%
{Capture-Avoiding Substitution as a Nominal Algebra}}}.
\newblock {\sl \bibinfo{journal}{Formal Aspects of Computing}}
  \bibinfo{volume}{20}(\bibinfo{number}{4-5}), pp. \bibinfo{pages}{451--479},
  \doi{10.1007/11921240\_14}.

\bibitemdeclare{article}{gabbay:oneaah-jv}
\bibitem{gabbay:oneaah-jv}
\bibinfo{author}{Murdoch~J. Gabbay} \& \bibinfo{author}{Aad Mathijssen}
  (\bibinfo{year}{2008}):
  \emph{\bibinfo{title}{\href{http://www.gabbay.org.uk/papers.html\#oneaah-jv}%
{One-and-a-halfth-order Logic}}}.
\newblock {\sl \bibinfo{journal}{Journal of Logic and Computation}}
  \bibinfo{volume}{18}(\bibinfo{number}{4}), pp. \bibinfo{pages}{521--562},
  \doi{10.1093/logcom/exm064}.

\bibitemdeclare{article}{gabbay:nomuae}
\bibitem{gabbay:nomuae}
\bibinfo{author}{Murdoch~J. Gabbay} \& \bibinfo{author}{Aad Mathijssen}
  (\bibinfo{year}{2009}):
  \emph{\bibinfo{title}{\href{http://www.gabbay.org.uk/papers.html\#nomuae}{No%
minal universal algebra: equational logic with names and binding}}}.
\newblock {\sl \bibinfo{journal}{Journal of Logic and Computation}}
  \bibinfo{volume}{19}(\bibinfo{number}{6}), pp. \bibinfo{pages}{1455--1508},
  \doi{10.1093/logcom/exp033}.

\bibitemdeclare{article}{gabbay:nomalc}
\bibitem{gabbay:nomalc}
\bibinfo{author}{Murdoch~J. Gabbay} \& \bibinfo{author}{Aad Mathijssen}
  (\bibinfo{year}{2010}):
  \emph{\bibinfo{title}{\href{http://www.gabbay.org.uk/papers.html\#nomalc}{A
  nominal axiomatisation of the lambda-calculus}}}.
\newblock {\sl \bibinfo{journal}{Journal of Logic and Computation}}
  \bibinfo{volume}{20}(\bibinfo{number}{2}), pp. \bibinfo{pages}{501--531},
  \doi{10.1093/logcom/exp049}.

\bibitemdeclare{article}{gabbay:newaas-jv}
\bibitem{gabbay:newaas-jv}
\bibinfo{author}{Murdoch~J. Gabbay} \& \bibinfo{author}{Andrew~M. Pitts}
  (\bibinfo{year}{2001}):
  \emph{\bibinfo{title}{\href{http://www.gabbay.org.uk/papers.html\#newaas-jv}%
{A New Approach to Abstract Syntax with Variable Binding}}}.
\newblock {\sl \bibinfo{journal}{Formal Aspects of Computing}}
  \bibinfo{volume}{13}(\bibinfo{number}{3--5}), pp. \bibinfo{pages}{341--363},
  \doi{10.1007/s001650200016}.

\bibitemdeclare{book}{heijenoort:fregsb}
\bibitem{heijenoort:fregsb}
\bibinfo{author}{Jean van Heijenoort} (\bibinfo{year}{1967}):
  \emph{\bibinfo{title}{From Frege to G\"odel: a source book in mathematical
  logic, 1879-1931}}.
\newblock \bibinfo{publisher}{Harvard University Press}.

\bibitemdeclare{article}{henkin:comtot}
\bibitem{henkin:comtot}
\bibinfo{author}{Leon Henkin} (\bibinfo{year}{1950}):
  \emph{\bibinfo{title}{Completeness in the Theory of Types}}.
\newblock {\sl \bibinfo{journal}{Journal of Symbolic Logic}}
  \bibinfo{volume}{15}, pp. \bibinfo{pages}{81--91}.

\bibitemdeclare{inproceedings}{jojgov:holbp}
\bibitem{jojgov:holbp}
\bibinfo{author}{Gueorgui~I. Jojgov} (\bibinfo{year}{2002}):
  \emph{\bibinfo{title}{Holes with Binding Power}}.
\newblock In: {\sl \bibinfo{booktitle}{TYPES}}, {\sl \bibinfo{series}{Lecture
  Notes in Computer Science}} \bibinfo{volume}{2646},
  \bibinfo{publisher}{Springer}, pp. \bibinfo{pages}{162--181},
  \doi{doi:10.1007/3-540-39185-1\_10}.

\bibitemdeclare{article}{kurz:unians}
\bibitem{kurz:unians}
\bibinfo{author}{Alexander Kurz} \& \bibinfo{author}{Daniela Petri\c{s}an}
  (\bibinfo{year}{2010}): \emph{\bibinfo{title}{On Universal Algebra over
  Nominal Sets}}.
\newblock {\sl \bibinfo{journal}{Mathematical Structures in Computer Science}}
  \bibinfo{volume}{20}, pp. \bibinfo{pages}{285--318},
  \doi{10.1017/S0960129509990399}.

\bibitemdeclare{article}{lambert:exiir}
\bibitem{lambert:exiir}
\bibinfo{author}{Karel Lambert} (\bibinfo{year}{1963}):
  \emph{\bibinfo{title}{Existential Import Revisited}}.
\newblock {\sl \bibinfo{journal}{Notre Dame Journal of Formal Logic}}
  \bibinfo{volume}{4}(\bibinfo{number}{4}), pp. \bibinfo{pages}{288--292}.

\bibitemdeclare{article}{salibra:appual}
\bibitem{salibra:appual}
\bibinfo{author}{Giulio Manzonetto} \& \bibinfo{author}{Antonino Salibra}
  (\bibinfo{year}{2010}): \emph{\bibinfo{title}{Applying Universal Algebra to
  Lambda Calculus}}.
\newblock {\sl \bibinfo{journal}{Journal of Logic and computation}}
  \bibinfo{volume}{20}(\bibinfo{number}{4}), pp. \bibinfo{pages}{877--915},
  \doi{10.1093/logcom/exn085}.

\bibitemdeclare{article}{nanevski:conmtt}
\bibitem{nanevski:conmtt}
\bibinfo{author}{Aleksandar Nanevski}, \bibinfo{author}{Frank Pfenning} \&
  \bibinfo{author}{Brigitte Pientka} (\bibinfo{year}{2008}):
  \emph{\bibinfo{title}{Contextual modal type theory}}.
\newblock {\sl \bibinfo{journal}{ACM Transactions on Computational Logic
  (TOCL)}} \bibinfo{volume}{9}(\bibinfo{number}{3}), pp.
  \bibinfo{pages}{1--49}, \doi{10.1145/1352582.1352591}.

\bibitemdeclare{article}{vonneumann:ubewam}
\bibitem{vonneumann:ubewam}
\bibinfo{author}{John von Neumann} (\bibinfo{year}{1929}):
  \emph{\bibinfo{title}{\"{U}ber eine {W}iderspruchsfreiheitsfrage in der
  axiomatischen {M}engenlehre}}.
\newblock {\sl \bibinfo{journal}{Journal f\"{u}r die reine und angewandte
  {M}athematik}} \bibinfo{volume}{160}.

\bibitemdeclare{article}{PaulsonLC:fougtp}
\bibitem{PaulsonLC:fougtp}
\bibinfo{author}{Lawrence~C. Paulson} (\bibinfo{year}{1989}):
  \emph{\bibinfo{title}{The Foundation of a Generic Theorem Prover}}.
\newblock {\sl \bibinfo{journal}{Journal of Automated Reasoning}}
  \bibinfo{volume}{5}(\bibinfo{number}{3}), pp. \bibinfo{pages}{363--397},
  \doi{10.1007/BF00248324}.

\bibitemdeclare{inproceedings}{pientka:belfpr}
\bibitem{pientka:belfpr}
\bibinfo{author}{Brigitte Pientka} \& \bibinfo{author}{Joshua Dunfield}
  (\bibinfo{year}{2010}): \emph{\bibinfo{title}{Beluga: A Framework for
  Programming and Reasoning with Deductive Systems (System Description)}}.
\newblock In: {\sl \bibinfo{booktitle}{Proceedings of the 5th International
  Joint Conference on Automated Reasoning ({IJCAR} 2010)}}, {\sl
  \bibinfo{series}{Lecture Notes in Computer Science}} \bibinfo{volume}{6173},
  \bibinfo{publisher}{Springer}, pp. \bibinfo{pages}{15--21}.

\bibitemdeclare{inproceedings}{pientka:opthop}
\bibitem{pientka:opthop}
\bibinfo{author}{Brigitte Pientka} \& \bibinfo{author}{Frank Pfenning}
  (\bibinfo{year}{2003}): \emph{\bibinfo{title}{Optimizing Higher-Order Pattern
  Unification}}.
\newblock In: {\sl \bibinfo{booktitle}{Proceedings of the 19th International
  Conference on Automated Deduction ({CADE} 2003)}}, {\sl
  \bibinfo{series}{Lecture Notes in Computer Science}} \bibinfo{volume}{2741},
  \bibinfo{publisher}{Springer}, pp. \bibinfo{pages}{473--487}.

\bibitemdeclare{article}{russell:ond}
\bibitem{russell:ond}
\bibinfo{author}{Bertrand Russell} (\bibinfo{year}{1905}):
  \emph{\bibinfo{title}{On Denoting}}.
\newblock {\sl \bibinfo{journal}{Mind, New Series}}
  \bibinfo{volume}{14}(\bibinfo{number}{56}), p. \bibinfo{pages}{479–493}.

\bibitemdeclare{article}{selinger:lamca}
\bibitem{selinger:lamca}
\bibinfo{author}{Peter Selinger} (\bibinfo{year}{2002}):
  \emph{\bibinfo{title}{The lambda calculus is algebraic}}.
\newblock {\sl \bibinfo{journal}{Journal of Functional Programming}}
  \bibinfo{volume}{12}(\bibinfo{number}{6}), pp. \bibinfo{pages}{549--566},
  \doi{10.1017/S0956796801004294}.

\bibitemdeclare{inproceedings}{gabbay:frepbm}
\bibitem{gabbay:frepbm}
\bibinfo{author}{Mark~R. Shinwell}, \bibinfo{author}{Andrew~M. Pitts} \&
  \bibinfo{author}{Murdoch~J. Gabbay} (\bibinfo{year}{2003}):
  \emph{\bibinfo{title}{\href{http://www.gabbay.org.uk/papers.html\#frepbm}{Fr%
esh{ML}: Programming with Binders Made Simple}}}.
\newblock In: {\sl \bibinfo{booktitle}{Proceedings of the 8th {ACM} {SIGPLAN}
  International Conference on Functional Programming ({ICFP} 2003)}},
  \bibinfo{volume}{38}, \bibinfo{publisher}{ACM Press}, pp.
  \bibinfo{pages}{263--274}, \doi{10.1145/944705.944729}.

\bibitemdeclare{book}{smullyan:firol}
\bibitem{smullyan:firol}
\bibinfo{author}{Raymond Smullyan} (\bibinfo{year}{1968}):
  \emph{\bibinfo{title}{First-order logic}}.
\newblock \bibinfo{publisher}{Springer}.
\newblock \bibinfo{note}{Reprinted by Dover, 1995}.

\bibitemdeclare{article}{gabbay:nomu-jv}
\bibitem{gabbay:nomu-jv}
\bibinfo{author}{Christian Urban}, \bibinfo{author}{Andrew~M. Pitts} \&
  \bibinfo{author}{Murdoch~J. Gabbay} (\bibinfo{year}{2004}):
  \emph{\bibinfo{title}{\href{http://www.gabbay.org.uk/papers.html\#nomu-jv}{N%
ominal Unification}}}.
\newblock {\sl \bibinfo{journal}{Theoretical Computer Science}}
  \bibinfo{volume}{323}(\bibinfo{number}{1--3}), pp. \bibinfo{pages}{473--497},
  \doi{10.1016/j.tcs.2004.06.016}.

\bibitemdeclare{book}{sorensen:lecchi-book}
\bibitem{sorensen:lecchi-book}
\bibinfo{author}{Pawel Urzyczyn} \& \bibinfo{author}{Morten S{\o}rensen}
  (\bibinfo{year}{2006}): \emph{\bibinfo{title}{Lectures on the Curry-Howard
  isomorphism}}.
\newblock {\sl \bibinfo{series}{Studies in Logic}} \bibinfo{volume}{149},
  \bibinfo{publisher}{Elsevier}.

\bibitemdeclare{article}{Wirth:desid}
\bibitem{Wirth:desid}
\bibinfo{author}{Claus-Peter Wirth} (\bibinfo{year}{2004}):
  \emph{\bibinfo{title}{Descente Infinie + {D}eduction}}.
\newblock {\sl \bibinfo{journal}{Logic Journal of the IGPL}}
  \bibinfo{volume}{12}(\bibinfo{number}{1}), pp. \bibinfo{pages}{1--96},
  \doi{10.1093/jigpal/12.1.1}.

\bibitemdeclare{phdthesis}{zsido:typas}
\bibitem{zsido:typas}
\bibinfo{author}{Julianna Zsid\'o} (\bibinfo{year}{2010}):
  \emph{\bibinfo{title}{Typed abstract syntax}}.
\newblock Ph.D. thesis, \bibinfo{school}{University of Nice}.

\end{thebibliography}
\hyphenation{Mathe-ma-ti-sche}

\end{document}